\theoremstyle{plain}
  \newtheorem{theorem}{Theorem}[section]
  \newtheorem{corollary}[theorem]{Corollary}
  \newtheorem{lemma}[theorem]{Lemma}
  \newtheorem{remark}[theorem]{Remark}
\theoremstyle{definition}
  \newtheorem{definition}{Definition}[section]
  \newtheorem{assumption}[theorem]{Assumption}
\theoremstyle{remark}
\numberwithin{equation}{section}
 \newcounter{smallarabics}
\newenvironment{arabicenumerate}
{\begin{list}{{\normalfont\textrm{\arabic{smallarabics})}}}
  {\usecounter{smallarabics}\setlength{\itemindent}{0cm}
  \setlength{\leftmargin}{5ex}\setlength{\labelwidth}{4ex}
  \setlength{\topsep}{0.75\parsep}\setlength{\partopsep}{0ex}
   \setlength{\itemsep}{0ex}}}
{\end{list}}
\newcounter{smallroman}
\newcommand{\ben}{\begin{arabicenumerate}}
\newcommand{\een}{\end{arabicenumerate}}
\newtheorem*{theorem*}{Theorem}
\newcommand\otimesal{\mathop{\hbox{\raise 1.6 ex
  \hbox{$\scriptscriptstyle\mathrm{al}$}
\kern -0.92 em \hbox{$\otimes$}}}}
\newcommand\oplusal{\mathop{\hbox{\raise 1.6 ex
  \hbox{$\scriptscriptstyle\mathrm{al}$}
\kern -0.92 em \hbox{$\oplus$}}}}
\newcommand\Gammal{\hbox{\raise 1.7 ex
\hbox{$\scriptscriptstyle\mathrm{al}$}\kern -0.50 em $\Gamma$}}
\let\si=\sigma
   \let\Om=\Omega
\newcommand{\caE}{{\mathcal E}}
\newcommand{\caF}{{\mathcal F}}
\newcommand{\caH}{{\mathcal H}}
\newcommand{\bbR}{{\mathbb R}}
\newcommand{\bbZ}{{\mathbb Z}}
\newcommand{\opunit}{\text{1}\kern-0.22em\text{l}}
\newcommand{\bsn}{{\boldsymbol n}}
\newcommand{\norm}{ \|}
\newcommand{\str}{ |}
\newcommand{\e}{{\mathrm e}}
\newcommand{\beq}{ \begin{equation} }
\newcommand{\eeq}{ \end{equation} }
\newcommand{\bet}{ \begin{theorem} }
\newcommand{\eet}{ \end{theorem} }
\newcommand{\bfk}{\mathbf{k}}
\newcommand{\hfield}{l}
\newcommand{\rra}{\rangle}
\newcommand{\lla}{\langle}
\begin{document}

\title{
Stability of the uniqueness regime for ferromagnetic Glauber dynamics under non-reversible perturbations}
\author{Nick Crawford, Wojciech De Roeck}

%

\maketitle
\vspace{0.5cm}

\begin{abstract}
{We prove a general stability property concerning finite-range, attractive interacting particle systems on $\{-1, 1\}^{\bbZ^d}$.  If the particle system has a unique stationary measure and, in a precise sense, relaxes to this stationary measure at an exponential rate, then any small perturbation of the dynamics also has a unique stationary measure to which it relaxes at an exponential rate.   To apply this result, we study the particular case of  Glauber dynamics for the Ising model.  We show that for any non-zero external field the dynamics converges to its unique invariant measure at an exponential rate.  Previously, this was only known for $\beta<\beta_c$ and $\beta$ sufficiently large.  As a consequence of our stability property, we then conclude that Glauber dynamics for the Ising model is stable to small, nonreversible perturbations in the \textit{entire} uniqueness phase, excluding only the critical point.}
%

\end{abstract}

\section{Introduction}

In this paper, we consider stochastic interacting particle systems -- Markov process on $\Om:= \{-1, 1\}^{\bbZ^d}$ with finite range rates.
Probably the most basic question about such systems concerns their \emph{phase diagrams} with respect to variation of physical parameters like interaction strength, density, etc.: What are their stationary states? Is there a unique stationary state? Is there spontaneous breaking of symmetries and/or long-range correlations?
These issues are reasonably well understood in the context of \emph{reversible dynamics}, i.e.\ when the process under study is reversible w.r.t.\ one or more measures.  Indeed, in natural situations, see e.g.\ \cite{kunsch1984time} it is known that the invariant measures are \emph{Gibbs measures} for a given potential and then the question reduces to classifying all the Gibbs measures, a classical challenge of statistical mechanics.  
In physics, the distinction between reversible and non-reversible dynamics corresponds to the distinction between equilibrium and nonequilibrium dynamics. In the former case, the system is coupled simply to a thermal bath, and in the latter case it is, for example, driven by an external non-gradient field,  or coupled to several thermal baths with non-equal temperatures. The study of non-equilibrium dynamics is a major challenge in statistical physics. To avoid confusion with the usage of the term 'equilibrium' in probability, we will henceforth avoid it, but we stress that our work is inspired by this challenge in physics.

Hence, our aim here is to move beyond reversible dynamics, where it is much harder to formulate general truths.  An example of such an attempt to find a general rule is the \emph{positive rates conjecture} 'In 1d, noisy cellular automata have unique invariant states'. This conjecture has been proved false by a very intricate counterexample \cite{Gacs, Gray2}, even though it is true within the class of attractive dynamics \cite{Gray1}.  Contrasting this,  the restriction of the rates conjecture to reversible dynamics is true: There is 'absence of phase transitions in 1d'. This statement simply means that for short-range interactions, there is a unique Gibbs state on $\bbZ$. As such, the dynamics does not necessarily add much to the question. Not so for nonreversible dynamics! 

Beyond reversibility, one of the conceptually simplest problems  concerns \emph{small perturbations} around reversibility.
For example, if we add  a small reversibility-breaking term, is the phase diagram stable? 
This is the specific question that we address in this note.  Our aims are two-fold.  First, we want to develop a general method allowing us to conclude that stationary measures of small perturbations of a given reversible dynamics are unique, provided that said reversible  dynamics has a unique stationary measure.  Second we wish to \emph{apply} our general theory in a well studied particular case, Ising Glauber dynamics.  Stability of the coexistence phase, that is the stability of the property that there is more than one stationary measure under perturbations, is a more subtle question.  We discuss it a bit further below, but have nothing rigorous to say in this paper.

Ising Glauber dynamics is the natural reversible dynamics associated to a ferromagnetic Ising model with a pair interaction potential (the model and dynamics are described precisely in Section \ref{sec: application}). In this particular case the following picture is known to hold.  The static phase diagram in spatial dimension $d$ may be expressed in terms of two real parameters: the inverse temperature $\beta$ and external field strength $h$.  There is $\beta_c=\beta_c(d)>0$ such that there is a unique Gibbs state for $\beta \leq \beta_c(d)$ or for a nonzero magnetic field $h$. For   $\beta>\beta_c(d)$ at $h=0$, there is coexistence of a $+$ and $-$ phase (magnetic ordering).  A remark worth making here is that in the uniqueness phase it is relatively easy (via monotonicty) to see that the Gibbs measure is also the only stationary measure for the corresponding Glauber dynamics.  That is, there are no non-Gibbs stationary measures.  To our knowledge, the corresponding result is not known in the coexistence region.

Let us give some natural examples of nonreversible perturbations to keep in mind below.  One can imagine making the temperature (which enters as a parameter into the Glauber dynamics) site-dependent, e.g.\ being $\beta \pm \delta$ with $+\delta$ on the even sub-lattice of $\mathbb Z^d$ and $-\delta$ on the odd sublattice. Although our aims are broader, let us note in passing that work related to this model just mentioned appears in both the theoretical physics, \cite{AAM96}, and economics literature, see e.g. \cite{Dur99}. 
In the present paper we prove stability of the uniqueness phase for the Ising model in the following sense (see also \Cref{C:Ising}) \\[0.3mm] 
\begin{theorem}[Stability of uniqueness]
\label{T:Main}
Let the parameters $(\beta,h)$ and the spatial dimension $d$ be such that there is a unique Gibbs state for the Ising model, but excluding the critical point $(\beta_c(d),0)$. Then the weakly perturbed -possibly nonreversible- Glauber dynamics  is still in the uniqueness regime (the required smallness of the perturbation does in general depend on $(\beta,h,d)$).
\end{theorem}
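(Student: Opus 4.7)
The plan is to reduce \Cref{T:Main} to two independent ingredients that the paper advertises: (a) a general stability statement, valid for any finite-range attractive interacting particle system on $\{-1,1\}^{\bbZ^d}$, asserting that if the unperturbed dynamics has a unique stationary measure and relaxes to it exponentially fast (in an appropriate norm capturing local observables), then the same is true after any small finite-range perturbation of the rates, even non-reversible ones; and (b) exponential relaxation of the unperturbed Ising Glauber dynamics on the full uniqueness set $\{(\beta,h):\beta \ne \beta_c\text{ or }h\ne 0\}\setminus\{(\beta_c,0)\}$. Given (a) and (b), \Cref{T:Main} is immediate: apply (a) with the Ising Glauber dynamics at any non-critical $(\beta,h)$ in the uniqueness regime, using (b) as the verification of the hypothesis.

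For ingredient (b), I would split the parameter region. The high-temperature regime $\beta<\beta_c$ (any $h$) is classical and follows from the Dobrushin--Shlosman complete analyticity theory and the log-Sobolev estimates of Martinelli--Olivieri--Schonmann. The genuinely new case is $\beta\geq\beta_c$ with $h\ne 0$, where the same log-Sobolev machinery fails because at $h=0$ there is coexistence. By $\pm$-symmetry one may assume $h>0$. Here I would exploit attractiveness and monotonicity: run the standard monotone coupling of the Glauber dynamics started from $\underline{-}$ and $\underline{+}$, and control the local discrepancy $\mathbb P(\eta^-_t(0)\ne \eta^+_t(0))$. Since $h>0$, there is a bias toward $+$ on every spin, and droplets of $-$ in a background of $+$ shrink. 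A contour/cluster-expansion analysis in the spirit of Schonmann--Shlosman (or the censoring arguments of Peres--Winkler) then yields that this discrepancy decays exponentially in $t$, uniformly in the volume. Since the dynamics is attractive, this control at the pair $(\underline{-},\underline{+})$ propagates to arbitrary monotone initial data by sandwiching, giving exponential relaxation of local functions as required by (a).

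For ingredient (a), the key idea is that in an attractive finite-range system exponential relaxation admits an equivalent formulation in terms of the discrepancy process between two monotonically coupled copies. Under the unperturbed rates this discrepancy process is subcritical with an exponential tail. A non-reversible perturbation of size $\delta$, while possibly breaking reversibility and destroying the log-Sobolev route, can still be coupled to the unperturbed one in a way that additional discrepancies are only created at rate $\mathcal O(\delta)$ per site. A Duhamel-type expansion around the unperturbed semigroup, combined with the exponential suppression of discrepancies by the unperturbed dynamics, then shows that the perturbed discrepancy process remains subcritical as long as $\delta$ is below a threshold depending on the unperturbed relaxation rate and range. Subcriticality in turn forces a unique stationary measure and exponential relaxation for the perturbed dynamics.

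The hardest step is (b) in the low-temperature non-zero-field region; this is where the standard reversible toolkit is unavailable and one must control persistence of large minority droplets in a finite but possibly slow window using a contour representation. Ingredient (a), though technically involved, is conceptually a perturbative argument built on the attractive coupling of Liggett, and will hinge primarily on making the Duhamel expansion quantitative enough to absorb the perturbation into the exponential decay margin provided by (b).
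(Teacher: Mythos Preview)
Your two-step reduction (a general stability theorem for attractive systems plus exponential relaxation of the unperturbed Ising Glauber dynamics) is exactly the architecture of the paper, and that part is fine. The divergences are in how each ingredient is actually carried out.

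For ingredient (a), your description is in the right spirit but the paper does not run a Duhamel expansion of the semigroup. Instead it builds a Propp--Wilson--type coupling of the two dynamics on a single probability space, defines backward \emph{influence clusters} $W_{x,t}$, and shows that for the unperturbed attractive process the survival probability of such a cluster equals $\tfrac12(\mathbb E_+[\sigma_t(x)]-\mathbb E_-[\sigma_t(x)])$, hence decays exponentially by the assumed $L^\infty$-mixing. Perturbation arrivals occur at rate $O(\epsilon)$ per site, and the perturbed influence cluster is contained in a union of unperturbed clusters glued at these rare arrivals (this is the analogue of your Duhamel step). The crucial point you do not mention is that summing this expansion naively diverges; the paper instead coarse-grains space--time into boxes, declares a box ``bad'' if it contains a perturbation arrival or an atypically long unperturbed cluster or light-ray, and reduces survival of the perturbed cluster to oriented percolation of bad boxes. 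Domination by a subcritical product field (via Liggett--Schonmann--Stacey) then closes the argument. Your sketch would need this percolation step to become a proof.

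For ingredient (b), your route differs substantially from the paper and has a real gap. You propose to establish exponential relaxation \emph{dynamically}, via droplet shrinkage and contour estimates ``in the spirit of Schonmann--Shlosman'' when $h\ne 0$, $\beta\ge\beta_c$. The paper does the opposite: it reduces to a purely \emph{static} condition, the Weak Spatial Mixing estimate $\langle\sigma_0\rangle^+_{B_L}-\langle\sigma_0\rangle^-_{B_L}\le Ce^{-cL}$, and then invokes Martinelli--Olivieri to upgrade WSM to infinite-volume exponential $L^\infty$-mixing. The new content of the paper is a proof of WSM for all $h\ne 0$ and all $\beta$ via a change of variables $2\chi=\sigma^1+\sigma^2$, $2\eta=\sigma^1-\sigma^2$ on pairs of configurations, followed by a random-current surgery that trades the zero-field current controlling $\langle\sigma_0\rangle^+-\langle\sigma_0\rangle^-$ for a current with field $2h>0$ on a large cluster around the origin; the gain $\e^{-c|X|}$ from the field then produces the exponential decay. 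Your alternative has two problems. First, even for $\beta<\beta_c$ you invoke complete analyticity and log-Sobolev, but these are \emph{not} known up to $\beta_c$ in $d\ge 3$; the paper instead relies on WSM, which is established up to $\beta_c$ for $h=0$ by different means. Second, for $h\ne 0$ and intermediate $\beta$ (say just above $\beta_c$), no dynamical contour or censoring argument of the type you cite is available; indeed the paper explicitly notes that log-Sobolev fails in this regime, and controlling droplet erosion dynamically is precisely the open difficulty discussed in the introduction. The static random-current argument is what circumvents this, and it is the heart of the paper's contribution.
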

\noindent
A more precise formulation of this result appears below, see in particular \Cref{C:Ising} and the paragraph containing \Cref{T:WSM}.
We regard this as the main result of this paper as it completely settles our question in the uniqueness phase for a touchstone example.

\subsection{Previous results} The literature related to our inquiry is sprawling. At first sight, there are a few papers containing results which seem to be shades of ours, most prominently \cite{holley1985possible}.  
Let us begin our short review by mentioning (very) high temperature techniques. First, one may perturb around independent spin-flips using the independence explicitly in the perturbation expansion, see e.g. \cite{NMW,yarotsky}.  A related technique is a space-time analog of the well-known Dobrushin uniqueness condition from the theory of Gibbs states, see Theorem 4.1 of \cite{LiggettBook} and \cite{D}. Due to their simplicity these methods are robust --  they can both handle arbitrary short ranged classical (Markovian) perturbations.  Unfortunately, they are also very restrictive in that they apply only if there is 'enough' noise in the system; in the one case given by the explicit proximity to independent spin flips and in the other as given by the '$M \leq \epsilon$ - condition'.

In a different direction, recall that in equilibrium statistical mechanics the free energy plays a central role in distinguishing between the uniqueness and phase coexistence regimes.  For translation invariant interactions differentiability of the free energy is equivalent to uniqueness of (translation invariant) Gibbs states. {As a consequence, 'stability' of the uniqueness regime follows when this functional is $C^1$. } To this end, in \cite{Gross}, Gross showed that the Dobrushin uniqueness condition implies the free energy is $C^2$ but that it need not be analytic. Subsequent work by Dobrushin and Shlosman \cite{DS2} found a number of sufficient conditions (complete analyticity) implying analyticity.  Finally, Stroock and Zegarlinski, in \cite{SZ2}, showed that one of these conditions is equivalent to the existence of a log-Sobolev inequality for the associated Glauber dynamics.  Thus, for reversible dynamics, there is, at least at high enough temperature, a circle of ideas which allows one to determine when the uniqueness phase is stable.  Of particular relevance to us, these conditions are manifest both in the properties of the statics and the associated dynamics.

While the developments briefly discussed in the preceding paragraph capture some of the spirit of our inquiry, using log-Sobolev inequalities as basis for perturbing Glauber dynamics in the whole uniqueness phase is not viable.  For one thing, except in dimension two \cite{MOS,SS}, it is not known, even not for the nearest neighbor Ising model,
 if a log-Sobolev inequality holds up to the critical $\beta_c$.  Even worse, if an external field $h \neq0$ is added to the Hamiltonian, there is always a unique Gibbs measure by the Lee-Yang Theorem.   However, for $\beta$ large, it is known that the associated Glauber dynamics cannot satisfy a log-Sobolev inequality.  
 
Of course, complete analyticity and log-Sobolev inequalities are strong sufficient conditions. Depending on one's aim they need not be necessary, especially if we assume that the dynamics under consideration has additional that additional helpful properties. There are two examples of this change of perspective that we want to highlight, both involving the concept of attractivity for interacting particle systems.  (Attractivity is defined formally at \Cref{eq: attractive minus,eq: attractive plus}, while an instance of its important consequence, the existence of monotone couplings of trajectories, is discussed in \Cref{L:Attract}.)  

The first example, \cite{holley1985possible}, proves, similar to our \Cref{thm: main}, that small \textit{attractive} perturbations of attractive particle systems in the uniqueness phase also have unique invariant measures.  Our result is stronger in that it allows the perturbation to be non-attractive.  In addition, our proof is considerably shorter owing to our use of techniques not available at that time. In any case, Holley's result would allow one to show unicity of the invariant measure,  up to $\beta_c$, for perturbations of Ising Glauber dynamics of the type suggested just above  \Cref{T:Main}.

{The second example is of direct relevance to our proofs below.  In \cite{MOS}, Martinelli and Oliveiri show that, given an attractive dynamics, the 'Weak Spatial Mixing' (WSM) condition (see \Cref{WMC} below)
 implies that the infinite-volume dynamics has a unique stationary state $\lla \cdot\rra_{*}$, to which it is strongly exponentially mixing; 
  \beq
 \label{sem}
 \sup_{\sigma_0}\str \lla \si_t(x) \rra_{\sigma_0} - \lla \si(x) \rra_{*} \str \leq Ce^{-c t},
 \eeq
where $\lla \cdot\rra_{\sigma_0}$ is the dynamics started from $\sigma_0$. 
To state the WSM condition, consider the extremal finite volume stationary states $\lla \cdot \rra^{\pm}_{B_L}$ in the cube $B_L$ of sidelength $L$ centered at origin, and corresponding to $\pm$ boundary conditions. Then, the WSM  condition means that
 \beq
 \label{WMC}
 \lla \si(0) \rra^+_{B_L}-\lla \sigma(0) \rra^{-}_{B_L}\leq Ce^{-cL},
 \eeq}
Later, \cite{louis2004ergodicity}  showed that for translation-invariant attractive systems, the WSM condition \eqref{WMC} is in fact equivalent to \eqref{sem}. 

From our point of view, the main advantage of the WSM condition \eqref{WMC} is that we can show it to hold true throughout the uniqueness regime for ferromagnetic interactions.  The WSM condition was established already for $d=2$ in \cite{SS}, for $h=0, \beta< \beta_c, d\geq 2$ in \cite{H} and for  $h\neq 0$ and large $\beta$ in \cite{M}. In the present paper we provide an argument which applies for $h\neq 0$ and arbitrary $\beta>0$. \begin{theorem}
\label{T:WSM}
 For any $h\neq 0$ and finite range ferromagnetic interaction $J_{xy} \geq 0$, the WSM condition \eqref{WMC} holds.
 \end{theorem}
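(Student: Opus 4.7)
The plan is to reduce \Cref{T:WSM} to an exponential-decay statement for a connection probability in the Edwards--Sokal random cluster representation with a ``ghost'' vertex, and then to establish that decay by exploiting the sharpness induced by $h\neq 0$.

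WLOG $h>0$ by the $\sigma \leftrightarrow -\sigma$, $h\leftrightarrow -h$ symmetry. I would work on $B_L \cup \{g\}$, where $g$ is an auxiliary ``ghost'' vertex: each ordinary edge $\{x,y\}$ is open with probability $p_{xy}=1-e^{-2\beta J_{xy}}$, and each ghost edge $\{x,g\}$ is open with probability $p_h=1-e^{-2\beta h}$. In this representation, $+$ Ising boundary conditions correspond to wiring $\partial B_L$ together with $g$, while $-$ boundary conditions wire $\partial B_L$ to itself but keep it disconnected from $g$. The boundary data can only affect the spin at $0$ through an FK cluster that connects $0$ to $\partial B_L$ without passing through $g$: if $0\leftrightarrow g$, the cluster has spin $+1$ in both boundary conditions (the ghost plays the role of the external field and is frozen $+$), and free clusters contribute zero to $\lla\sigma(0)\rra$ on average. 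A short calculation in a monotone coupling of the two FK measures then yields
\begin{equation*}
\lla \sigma(0) \rra^+_{B_L} - \lla \sigma(0) \rra^-_{B_L} \;\leq\; 2\, \phi^{+}_{B_L}\bigl[\, 0 \leftrightarrow \partial B_L \text{ in } B_L,\ 0 \not\leftrightarrow g \,\bigr].
\end{equation*}

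The main step will be to show that this ``ghost-avoiding'' connection probability decays exponentially in $L$, uniformly in $\beta$. Heuristically, for $h>0$ every site is independently attached to $g$ with probability $p_h>0$, so any large FK cluster almost surely touches $g$ and the ghost-avoiding event is rare. To make this quantitative, I would pursue one of the following: (i) derive an Aizenman--Barsky--Fern\'andez-type differential inequality for $\theta_L(h):= \phi_{B_L}(0 \leftrightarrow \partial B_L,\ 0\not\leftrightarrow g)$ in the variable $h$, using a Russo-type formula for the ghost edges; or (ii) pass to the random current representation, where the switching lemma converts the problem into a clean percolation statement for doubled currents, and then apply an OSSS/DCRT-style sharpness argument restricted to sources disjoint from $g$. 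Either route should produce an exponential bound valid for every $\beta>0$.

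The hard part will be securing uniformity in $\beta$: at $h=0$ and $\beta=\beta_c$ the analogous connection probability does \emph{not} decay exponentially, so the argument must use positivity of $h$ in a genuinely quantitative way. The ghost-vertex picture supplies the right physical mechanism --- a small positive field inserts a ``sink'' that makes ghost-avoiding percolation effectively subcritical at every temperature --- but converting this intuition into a $\beta$-independent exponential bound requires the sort of sharpness machinery developed in the modern random-current literature. Once that step is in place, \Cref{T:WSM} follows immediately from the FK estimate above.
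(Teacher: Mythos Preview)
Your reduction via Edwards--Sokal has a concrete problem. Under $+$ Ising boundary conditions, $\partial B_L$ is wired to the ghost $g$, so the event $\{0\leftrightarrow\partial B_L,\ 0\not\leftrightarrow g\}$ is empty in $\phi^+_{B_L}$; the inequality you write is vacuous for that measure. If instead you work with the ES measure corresponding to $-$ boundary conditions and $h>0$, the ghost is frozen $+$ while $\partial B_L$ is frozen $-$, and the resulting FK measure is \emph{conditioned} on $\{g\not\leftrightarrow\partial B_L\}$. That conditioning destroys positive association, so the ``short calculation in a monotone coupling'' you invoke is not available. A correct reduction exists (it is essentially the $\chi,\eta$ duplication the paper uses), but it does not drop out of standard FK monotonicity.

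More seriously, the main decay step is only gestured at. Both routes you suggest---Aizenman--Barsky--Fern\'andez differential inequalities and OSSS/DCRT sharpness---are machinery for locating a critical point and proving exponential decay \emph{in the subcritical phase of some parameter}. Here there is no subcritical parameter in sight: for $\beta>\beta_c$ the bulk FK model is supercritical, and you are asking for decay of a ghost-avoiding event inside it. Neither a Russo formula in $h$ nor a randomized-algorithm argument gives this without a new idea. The paper's proof supplies exactly that missing idea and it is of a completely different nature: working in the random-current representation of the duplicated system, one finds for each current $\mathbf k$ with $\partial\mathbf k=\{0,\mathfrak g\}$ a radius $R_0\in(L/(4r),L/(2r))$ at which the cluster $T_{R_0}$ around $0$ has small edge-boundary $|\mathcal F|\le a|T_{R_0}|$. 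The parity constraint then forces all but $|\mathcal F|+1$ vertices of $T_{R_0}$ to have even degree, so inserting the field on $T_{R_0}$ costs $\sinh(2h)^{O(a|T_{R_0}|)}\cosh(2h)^{(1-O(a))|T_{R_0}|}$, yielding a factor $e^{-c|T_{R_0}|}\le e^{-cL}$. A combinatorial resummation then matches the modified sum back to the partition function. This is a direct, $\beta$-uniform argument that nowhere uses sharpness technology; your proposal does not contain a substitute for it.
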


\begin{remark}
As first pointed out to us by D.~Ueltschi \cite{DU}, \Cref{T:WSM} gives a new method for proving uniqueness of Gibbs measures when $h\neq 0$. Traditionally, one relied on the Lee-Yang circle theorem to prove this result (for intermediate values of $\beta$).
\end{remark}

 Martinelli and Olivieri do point out that their framework applies even if the dynamics is {\emph{not} reversible with respect to $\lla \cdot \rra_{*}$}.
At first glance, such a statement goes in the direction we wish to study.  However, for a given non-reversible dynamics, even one close to Glauber dynamics, checking the WSM condition \eqref{WMC} directly seems hard, as one does not expect to have much \textit{a priori} control on the finite volume stationary measures.  

The Martinelli-Olivieri result can be reformulated in terms of  a technique from the theory of Markov chains known as the Propp-Wilson coupling  \cite{propp1998coupling}.  The idea of applying  this coupling to Ising Glauber dynamics first appeared in \cite{SteifvdB}, see Theorem 3.4 there, and was recently employed in \cite{LS},  see Lemma 2.1.  Let us briefly recall the key point of those works (proper definitions appear in \Cref{S:Couple} below).{For any finite range Markov process on $\Om$, let $Y_{t}$ denote the dependence set of the origin tracked backwards from time $t$ to time $0$.}  The authors of \cite{LS} observe that for an attractive dynamics, $\mathbb{E}_{+}[\si_{t}(0)]-\mathbb{E}_{-}[\si_{t}(0)]=2\mathbb{P}(Y_t \neq \varnothing)$, where $\mathbb{E}_{\pm}$ denotes the expectation of the process started from all $+$'s or all $-$'s respectively.   It is easy to see that even if the process is not attractive,  decay of $\mathbb{P}(Y_t \neq \varnothing)$ implies uniqueness of the infinite volume stationary measure, while the rate of its decay controls mixing properties.   This reformulation thus provides us with a useful tool, and our proof is indeed based on it.

{The main restriction on our proof technique  is that the dynamics being perturbed off is attractive and that the WSM condition is satisfied.}  It is not clear to us the extent to which either of these conditions is necessary.  Even for a nearest neighbor ferromagnetic Ising model, one can invent reversible dynamics with the Gibbs measure as the only stationary measure (in the uniqueness regime) and for which our methods do not apply.  For example, consider a dynamics in which nearest neighbor pairs of spins are flipped if and only if they are the same ($++$ or $--$) with rates given by the ratio of the corresponding Gibbs weights.  This is \emph{not} an attractive dynamics and we can therefore say nothing about its stability properties.

To conclude our introductory section, let us say a few words regarding the issue of the stability of coexistence. This seems to be a hard problem which we have not yet made progress on.  The difficulty here is the slow  erosion of droplets of the 'wrong' sign in an extremal low-temperature phase, say '$-$' droplets in a sea of '$+$'.  It is believed that droplets disappear dynamically by a mean curvature flow', which, at a microscopic level, is driven by entropic effects, as opposed to energetic ones. Even for reversible models this picture has not yet been demonstrated, in spite of some recent zero-temperature progress \cite{lacoin2015heat,TonLacoin}.
{The evolution of droplets is more tractable in models where they erode faster, in particular for noisy perturbations of deterministic cellular automata that enjoy the 'eroder property' (erosion of finite droplets in finite time). The best known model in this class is the NEC Toom model \cite{toom1974nonergodic}.  Coexistence of distinct stationary states has been proven for this model and variants of it, see \cite{toom1974nonergodic,de2010phase,fernandez2001non} and \cite{lebowitz1990statistical} for an overview of models.}

%

\subsection*{Acknowledgements}
We thank Christian Maes for introducing us to these beautiful problems. WDR is grateful to the DFG (German Research Fund) and FWO (Flemish Research Fund) for financial support. 

\section{A General Theorem For Attractive Particle Systems}
\subsection{Setup and Result}
\label{S:Gen}
We consider spin systems on the lattice.  As usual, $\Omega=\{-1,1\}^{\bbZ^d}$ is the space of spin configurations $\sigma=(\sigma(x))_{x \in \bbZ^d}$, equipped with the product topology. We consider a Markov dynamics on $\Omega$ defined by local rates for spin updates. Let the space of all continuous functions on $\Om$ be denoted $C(\Omega)$.  Let $c^0_x(\sigma) \geq 0$ denote the rate of flipping $\sigma(x)$ to $-\sigma(x)$ in the configuration $\si$, i.e.\ the generator acting on $C(\Omega)$ is 
\begin{equation} \label{eq: generator}
Lf=\sum_{x} c^0_x(\sigma) \left( f(\sigma^{x})-f(\sigma)\right)
\end{equation}
where $\sigma^{x}(y)= (1-2\delta_{x,y})\sigma(y)$. 
The superscript `$0$' on $c^0$ foreshadows the fact that we will be comparing two dynamics; for the 'unperturbed' dynamics we use the superscript $0$ while the 'perturbed' dynamics will be distinguished by superscripts $1$, as in the perturbed rates $c^1(\si)$.
We always assume these rates to satisfy the following conditions:
\begin{enumerate}
\item Finite range for both $c^0,c^1$: There is a finite $r$ such that, for both $i=0,1$:  $c^i_x(\sigma)=c^i_x(\sigma')$ whenever $\sigma(y)=\sigma'(y)$ for all $\str y-x \str_{\infty}>r$.
\item Uniform bound $\mathop{\sup}\limits_{x,\sigma}c^0_x(\sigma)  <\infty$, for $i=0,1$.
\item  Attractivity for $c^0$: If $\sigma \geq \sigma'$, then 
\begin{align}
\sigma(x)=\sigma'(x)=-1 & \quad \text{implies}\quad  c^0_x(\sigma)  \geq c^0_x(\sigma') \label{eq: attractive minus}\\
\sigma(x)=\sigma'(x)=1 & \quad \text{implies}  \quad 
c^0_x(\sigma')  \geq c^0_x(\sigma)  \label{eq: attractive plus}
\end{align}
\end{enumerate}
The construction of a Feller Markov processes generated by the generator \eqref{eq: generator}, or likewise for $c^0\to c^1$, is standard, see e.g.\ \cite{LiggettBook}. The main case of interest  for us is when the dynamics generated by $c^0$ is reversible with respect to a Gibbs measure.
The above conditions can be easily verified for a given collection of rates. The following assumption, however, is highly nontrivial and can be verified only in specific (reversible) cases. 

Let us denote the law of the Markov process generated by the rates $c^0$ and started from $\sigma$ by $\mathbb{P}^0_\sigma$, then our main assumption reads
\begin{assumption}[Exponentially fast $L^{\infty}$-mixing for $\mathbb{P}^0$] \label{ass: exp strong mixing}
There is a unique invariant state $\mu^0$, such that, for all cylinder functions $f$, 
$$ 
 \sup_{\sigma_0} \left(\mathbb{E}^0_{\sigma_0}(f_t) -\mu^0( f) \right)  \leq C(f)\e^{-tc}
  $$
 where $C(f)$ is translation-invariant: $C(f)=C(\tau_xf)$.
\end{assumption}
As explained in the introduction, the paper \cite{MO} gives a method for checking this assumption under the hypothesis that all finite volume stationary measures satisfy a certain weak mixing condition and that the rates $c^0$ are translation-invariant. 

Now we come to our main technical result, which concerns the perturbed Markov process $\mathbb{P}^1_\sigma$ defined by the rates $c^1$. 
The fact that $c^1$ is a small perturbation of $c^0$ is quantified by a parameter $\epsilon$ defined by 
\beq \label{def: epsilon}
\epsilon \equiv 2 \sup_{x,\sigma} \str c^0_x(\sigma)- c^1_x(\sigma) \str
\eeq
\begin{theorem}\label{thm: main}
If $\epsilon$ is sufficiently small, then the exponential $L^\infty$-mixing  \Cref{ass: exp strong mixing} holds for the process $\mathbb{P}^1$ as well. In particular,   this process has a unique stationary state $\mu^1$. 
\end{theorem}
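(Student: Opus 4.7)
The plan is to work with the Harris graphical representation, coupling both dynamics on a single probability space, and to show that the backward dependence set at $(0,t)$ for the perturbed process $\mathbb{P}^1$ decays exponentially in $t$. Concretely, attach to each site $x\in\bbZ^d$ an independent Poisson process of rate $M:=\sup_{x,\sigma,i} c^i_x(\sigma)<\infty$ and to each event time $s$ an independent uniform variable $U^{x,s}\in[0,M]$. For $i\in\{0,1\}$, the trajectory $\sigma^i$ is obtained by flipping at $(x,s)$ iff $U^{x,s}\leq c^i_x(\sigma^i_{s-})$. Write $Y^i_t\subseteq\bbZ^d$ for the (random) minimal set of sites such that $\sigma^i_t(0)$ depends only on $\sigma^i_0\big|_{Y^i_t}$. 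By the observation of \cite{SteifvdB,LS}, attractivity of $c^0$ together with \Cref{ass: exp strong mixing} gives $\mathbb{P}(Y^0_t\neq\varnothing)\leq Ce^{-ct}$. Although attractivity may fail for $c^1$, elementary graphical arguments show that exponential decay of $\mathbb{P}(Y^1_t\neq\varnothing)$ (uniformly in the reference site) already implies uniqueness of $\mu^1$ together with the $L^\infty$-mixing statement of \Cref{ass: exp strong mixing}. It therefore suffices to prove $\mathbb{P}(Y^1_t\neq\varnothing)\leq C'e^{-c't}$ for $\epsilon$ sufficiently small.

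On the coupled space, the two flip decisions at an event $(x,s)$ agree unless $U^{x,s}$ falls in the interval bounded by $c^0_x(\sigma^0_{s-})$ and $c^1_x(\sigma^1_{s-})$, an interval of length at most $\epsilon/2$ by \eqref{def: epsilon}. Call such points \emph{discrepancy marks}: they form a (state-dependent) point process of conditional intensity bounded by $\epsilon/2$ per site per unit time. Away from discrepancy marks, the backward recursion defining $Y^1$ follows exactly the same rules as that defining $Y^0$, and so the two dependence sets coincide along any portion of the backward trajectory free of discrepancies.

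The argument is closed by a renewal decomposition. Trace $Y^1$ backward from $(0,t)$: either its trajectory avoids all discrepancy marks, in which case $Y^1_t=Y^0_t$ and the estimate on $\mathbb{P}(Y^0_t\neq\varnothing)$ applies directly, or there is a first (in reverse time) \emph{relevant} discrepancy mark at some $(y,s)$, from which, by the strong Markov property of the graphical marks, the backward process restarts. Integrating over the location of this first relevant mark yields an inequality of the form
\begin{equation*}
\mathbb{P}(Y^1_t\neq\varnothing)\;\leq\;Ce^{-ct}+C\epsilon\int_0^t e^{-c(t-s)}\,\mathbb{P}(Y^1_s\neq\varnothing)\,ds,
\end{equation*}
which a Grönwall-type iteration resolves into exponential decay as soon as $\epsilon$ is small enough that $C\epsilon<c$. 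The hard part is to make the notion of a ``relevant'' discrepancy mark quantitatively precise and to control the combinatorics of the restart: the naive backward light cone of $(0,t)$ grows linearly in $t$, so the proof must exploit that $Y^0$ in fact lives in a much thinner, exponentially decaying region, and that only discrepancy marks intersecting this thinner region can perturb $Y^1$. This separation of scales is what makes the perturbative argument viable.
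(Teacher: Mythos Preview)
Your setup—the Harris/graphical coupling, the notion of perturbation marks, and the reduction to showing $\mathbb{P}(Y^1_t\neq\varnothing)\leq C'e^{-c't}$—matches the paper exactly. The gap is in the closure of the argument. The renewal inequality you write,
\[
\mathbb{P}(Y^1_t\neq\varnothing)\;\leq\;Ce^{-ct}+C\epsilon\int_0^t e^{-c(t-s)}\,\mathbb{P}(Y^1_s\neq\varnothing)\,ds,
\]
does not follow from what you have. The probability that a discrepancy mark is \emph{relevant} at reverse time $s$ (i.e.\ lands in the backward cluster emanating from $(0,t)$) is governed by the \emph{size} $|Y^0_{(0,t)}(s)|$ of that cluster, not merely by the event $\{Y^0_{(0,t)}(s)\neq\varnothing\}$. \Cref{ass: exp strong mixing} gives exponential decay of the latter but says nothing about $\mathbb{E}\big[|Y^0_{(0,t)}(s)|\big]$, and there is no general reason this expectation should be $O(1)$: conditionally on survival, the cluster can be as large as the light cone, whose volume grows like $(t-s)^d$. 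Moreover, at a discrepancy one does not restart from a single site; one must spawn fresh unperturbed clusters from every $y\in S_{x}$ at the discrepancy point (this is the content of the paper's \Cref{lem: string of clusters}), and the already-existing branches of $Y^0$ continue as well. So both the integrand and the multiplicity of the restart are wrong as stated, and the ``hard part'' you flag at the end is in fact the entire substance of the proof. A further technical point: your discrepancy marks are defined via $c^0_x(\sigma^0_{s-})$ versus $c^1_x(\sigma^1_{s-})$, i.e.\ they are trajectory-dependent; for comparing the \emph{backward dependence sets} $Y^0$ and $Y^1$ one needs the paper's state-independent notion (a mark is a perturbation arrival if the $c^0$- and $c^1$-decisions differ for \emph{some} $\sigma$), which has the virtue of being an i.i.d.\ $\epsilon$-thinning of the Poisson marks.

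The paper sidesteps the size issue entirely by a space-time coarse-graining. One tiles $\bbZ^d\times\bbR$ by boxes of temporal height $L\gg\tau_0$ and spatial width $M\sim rL$, and declares a box \emph{bad} if (i) it contains a perturbation arrival, (ii) a lightray can cross it sideways, or (iii) an unperturbed influence cluster started at its top survives to its bottom. Each of these has probability $\leq e^{-cL}$ (the first by taking $\epsilon$ small depending on $L$), and badness of a box is measurable in a bounded enlargement. By \cite{liggett1997domination} the bad boxes are stochastically dominated by subcritical Bernoulli percolation, and by construction any long $W^1$-cluster forces a long nearest-neighbour path of bad boxes (\Cref{lem: path of boxes}). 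This percolation reduction is what replaces your Gr\"onwall step; it never needs a moment bound on $|Y^0|$, only the survival probability from \Cref{ass: exp strong mixing} and the trivial light-cone bound, applied at a single fixed scale $L$.
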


\subsubsection{Main application: Ising Glauber dynamics} \label{sec: application}

Let us briefly recall the setup for this model.  Let $J: \bbZ^d \times \bbZ^d \to \bbR^+$ be a nonnegative bounded function and write $J_{xy}=J(x,y)$. We require that $J_{xy}=J_{yx}$ and $J_{xy}$ whenever $\str x-y\str_1 > r$ for some finite $r$. Physically, this $J$ plays the role of a ferromagnetic interaction potential (between spins at sites $x,y$) with finite range. We also fix the inverse temperature $\beta >0$ and a magnetic field $h \in \bbR$.   To this data we associate a (formal) Hamiltonian
$$
\caH(\sigma)= -\sum_{x,y} J_{xy}\sigma(x)\sigma(y) -\sum_x h \sigma(x).
$$
and a Glauber dynamics by specifying the rates
$$
c^0_x(\sigma)= \e^{\beta h_{x,\mathrm{eff}}(\sigma) \sigma(x)},\qquad  h_{x,\mathrm{eff}}(\sigma) = h+  \sum_{y} J_{xy}\sigma(y). 
$$
It is clear that these rates satisfy the 3 conditions stated above (finite range, uniform bound and attractivity). The attractivity follows from the \emph{ferromagnetic} nature of the model, i.e.\ from the fact that $J_{xy} \geq 0$.
Let us see when this dynamics has a  unique invariant stationary state.  Let us first check when there is a unique Gibbs state for the Hamiltonian $\caH$, see e.g.\ \cite{israel2015convexity} for precise definitions and background. 
It is well-known that this is the case if $\beta \leq  \beta_c(d)$ for some critical $\beta_c(d)$, or, if $h \neq 0$.   Since the dynamics is attractive, the uniqueness of the stationary state is equivalent to uniqueness of the Gibbs measure, see Theorem 2.16 in Chapter 4 of \cite{LiggettBook}. To verify Assumption \ref{ass: exp strong mixing}  (exponentially fast mixing), we rely on the result of \cite{MO} which deduces this from the Weak Spatial Mixing (WSM) condition \eqref{WMC}. Hence it remains to prove the WSM condition \eqref{WMC} and this can be done in the entire uniqueness regime, except at the critical point $\beta=\beta_c(d),h=0$. 
More concretely, the WSM is proven 
\begin{enumerate}
\item In $d=1$, for all $\beta>0$, by standard transfer matrix methods.
\item In $d=2$,  for $h>0$ or $\beta<\beta_c(2)$, in \cite{SS}
\item  For $h=0, \beta< \beta_c, d\geq 2$ in \cite{H}.
\item  For  $h\neq 0$ and large $\beta$ in \cite{M}.
\item  For  $h\neq 0$ and any $\beta>0$ in the present paper, see \Cref{T1}.
\end{enumerate}

With this statement in hand the following result is  hence immediate from \cite{MO} combined with \Cref{thm: main}.
\begin{corollary}
\label{C:Ising}For parameters $(\beta, h)$ and spatial dimension $d$ such that $h\neq 0$ or $\beta<\beta_c(d)$,  the following statements hold:
\begin{enumerate}
\item The convergence condition \Cref{ass: exp strong mixing} holds for the Glauber dynamics on $\Om$.
\item Uniqueness of the stationary measure is stable to small perturbations in the sense of Theorem \ref{thm: main}.
\end{enumerate}
\end{corollary}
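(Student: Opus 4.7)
The plan is to reduce the corollary to two ingredients that are either already in the literature or established elsewhere in this paper: the Weak Spatial Mixing (WSM) condition throughout the uniqueness regime, and the Martinelli--Olivieri implication WSM $\Rightarrow$ Assumption \ref{ass: exp strong mixing} for attractive translation-invariant dynamics. Once these are in place, part (1) follows immediately and part (2) is a direct invocation of \Cref{thm: main}.

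For part (1), the first step is to verify that the Glauber rates $c^0_x(\sigma)=\exp(\beta h_{x,\mathrm{eff}}(\sigma)\sigma(x))$ meet the general hypotheses of \Cref{S:Gen}: finite range is inherited from the finite range of $J_{xy}$; the uniform bound follows because $h_{x,\mathrm{eff}}$ is uniformly bounded in $\sigma$ by $|h|+\sum_y J_{xy}$; attractivity is the standard consequence of $J_{xy}\geq 0$ (flipping rate at a $-$ site is nondecreasing in the configuration, flipping rate at a $+$ site is nonincreasing), yielding \eqref{eq: attractive minus}--\eqref{eq: attractive plus}. The dynamics is also translation invariant. Next, I would invoke the result of \cite{MO}: for translation-invariant attractive rates, the WSM bound \eqref{WMC} on the finite-volume Gibbs measures $\langle\cdot\rangle^{\pm}_{B_L}$ implies the strong exponential mixing \eqref{sem}, which is exactly \Cref{ass: exp strong mixing} (the function-dependent constant $C(f)$ is translation invariant because the model is, and the bound $\eqref{sem}$ is uniform in the initial condition).

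It then remains to verify \eqref{WMC} throughout the uniqueness regime, excluding the critical point. This is done by enumerating the cases already covered in the literature and pointing out that the gap is closed by \Cref{T:WSM} of the present paper. Concretely: in $d=1$ standard transfer matrix arguments give exponential decay for all $\beta>0$; in $d=2$ the WSM condition at $h>0$ or $\beta<\beta_c(2)$ is \cite{SS}; for $h=0$ and $\beta<\beta_c(d)$ with $d\geq 2$ it is \cite{H}; for $h\neq 0$ and large $\beta$ it is \cite{M}; and the remaining regime $h\neq 0$ with arbitrary $\beta>0$ (in any dimension) is precisely \Cref{T:WSM}. Together these cover $\{h\neq 0\}\cup\{\beta<\beta_c(d)\}$, which is exactly the hypothesis of the corollary.

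Part (2) is then a one-line application: since the Glauber dynamics satisfies the three structural hypotheses of \Cref{S:Gen} (finite range, uniform bound, attractivity) and, by part (1), also \Cref{ass: exp strong mixing}, \Cref{thm: main} applies verbatim. Hence any rates $c^1$ with $\epsilon=2\sup_{x,\sigma}|c^0_x(\sigma)-c^1_x(\sigma)|$ small enough, even if non-attractive and non-reversible, define a Feller dynamics with a unique stationary measure $\mu^1$ and the same exponential $L^\infty$ mixing. The only real obstacle in this whole chain is the proof of \Cref{T:WSM}; everything else is an assembly of existing tools with routine verification of hypotheses, so I would spend the bulk of the effort on establishing WSM for $h\neq 0$ at arbitrary $\beta$, and present the corollary itself as the short synthesis described above.
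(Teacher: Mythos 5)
Your proposal is correct and takes essentially the same approach as the paper: verify the structural hypotheses for Glauber rates, invoke the Martinelli--Olivieri implication WSM $\Rightarrow$ \Cref{ass: exp strong mixing}, cover the uniqueness regime by the known WSM cases plus \Cref{T:WSM}, and then apply \Cref{thm: main} for part (2). The paper states this chain more tersely (``immediate from \cite{MO} combined with \Cref{thm: main}''), but your fleshed-out version matches it step for step.
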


\subsection{Coupling construction and influence clusters}
\label{S:Couple}

Given a pair of processes with respective rates $c^i_x(\sigma)$ satisfying the assumptions set down there, let $\sigma^i_t, i=0,1$ denote the corresponding $\Om$-valued Markov processes. We warn the reader that we shall keep using the notation $\sigma$ for elements of $\Om$.  
We define an overall rate (finite by assumption)
$$
\lambda:=2\sup_{x,\sigma,i} c^i_x(\sigma)
$$

The pair of dynamics can be realized on a single probability space $(\Sigma,\mathbb{P},\caF)$ defined as follows:   We have a collection of independent  rate $\lambda$ Poisson processes $N_x$, indexed by $x\in \mathbb Z^d$.  For each arrival of these Poisson processes, say at $(x,t)$, we associate an independent  uniform random variable $U_{x,t}$ with values in the unit interval $[0,1]$. 
Define the numbers $v^i_x(\sigma) \in [0,1]$ as 
\begin{align}
 v^i_x(\sigma)= (1/\lambda) \begin{cases}  c^i_x(\sigma) & \text{if} \quad \sigma^i(x)=1  \\
 \lambda-  c^i_x(\sigma) &  \text{if} \quad \sigma^i(x)=-1  \end{cases}
\end{align}
Then, at each arrival $(x, t)$ we update $\sigma^i_t(x)$ as
\beq
\sigma^i_t(x) =  \begin{cases} +1  & \text{if} \quad U \geq  v^i_x(\sigma^i_{t-}) \\
-1 &  \text{if} \quad U <  v^i_x(\sigma^i_{t-})  \end{cases}
\eeq
We can check that the law of $\sigma^i$ is indeed given by the $\mathbb{P}^i$. 
Let $\mathcal{F}_{s,t}$ be the sigma field generated by the arrivals  between $s$ and $t$ and their associated $U$'s.
Hence, for any $x$ and $s<t$, the spin $\sigma^i_t(x)$ are measurable w.r.t.\  $\mathcal{F}_{s,t}$ and $\sigma^i_s$.  Fixing the data in $\mathcal{F}_{s,t}$, we  can consider $\sigma^i_t(x) $ almost surely as a measurable function of $\sigma^i_s$.  If we wish to emphasize this dependence, we shall write $\sigma^i_t(x)=\sigma^i_t(x; s, \sigma^i_s) $

\begin{lemma}[Attractivity of update scheme]
\label{L:Attract}
For $s<t$, the function $\sigma^0_s \mapsto \sigma^0_t$ is almost surely increasing, i.e.,
$$
\sigma^0_s \geq \tilde \sigma^0_s  \qquad \text{implies}  \qquad \sigma^0_t \geq \tilde \sigma^0_t
$$
\end{lemma}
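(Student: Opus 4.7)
The plan is to reduce the claim to a single-site monotonicity check, exploiting the graphical structure of the coupling. Between consecutive arrivals of the driving Poisson processes the spin configurations do not change, so the ordering is preserved trivially on those intervals. By the standard finite-range backward-light-cone argument (the cone of sites relevant to any fixed finite set up to time $t$ is a.s.\ finite), it is enough to process arrivals one at a time and show that each single update preserves the order. Concretely: if $\sigma \geq \tilde\sigma$ just before an arrival at a site $x$ with uniform mark $U$, the updated configurations $\sigma', \tilde\sigma'$ agree with $\sigma, \tilde\sigma$ outside of $x$, so the only thing to check is $\sigma'(x) \geq \tilde\sigma'(x)$.

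Next I would rewrite this pointwise inequality as an antitonicity statement for the thresholds $v^0_x$. Since $\sigma'(x) = +1$ iff $U \geq v^0_x(\sigma)$, and the \emph{same} $U$ is used for both processes, the implication $\tilde\sigma'(x) = +1 \Rightarrow \sigma'(x) = +1$ is equivalent to
\[
\sigma \geq \tilde\sigma \quad \Longrightarrow \quad v^0_x(\sigma) \leq v^0_x(\tilde\sigma).
\]
I would then split into cases on the pair $(\sigma(x), \tilde\sigma(x))$, noting that $(-,+)$ is incompatible with $\sigma \geq \tilde\sigma$. When $\sigma(x) = \tilde\sigma(x) = +1$, both thresholds equal $c^0_x/\lambda$, and the inequality is exactly the attractivity hypothesis \eqref{eq: attractive plus}. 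When $\sigma(x) = \tilde\sigma(x) = -1$, both equal $1 - c^0_x/\lambda$, and the inequality becomes \eqref{eq: attractive minus}. The remaining mixed case $\sigma(x)=+1,\ \tilde\sigma(x)=-1$ reduces to the elementary bound $c^0_x(\sigma) + c^0_x(\tilde\sigma) \leq \lambda$, which holds by the very definition $\lambda = 2\sup_{x,\sigma,i} c^i_x(\sigma)$.

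There is no real obstacle: attractivity of $c^0$ handles the two like-sign cases, and the only additional ingredient needed is that $\lambda$ is \emph{twice} the supremum of all rates, which is precisely what makes the mixed case go through. Note that the argument is entirely pathwise and yields the slightly stronger conclusion that the update map $\sigma^0_s \mapsto \sigma^0_t$, read as a function of the initial condition for fixed Poisson and uniform data on $[s,t]$, is monotone as a deterministic map on $\Omega$, not merely at the level of marginal laws.
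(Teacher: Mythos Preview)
Your proof is correct and follows essentially the same route as the paper: reduce to a single arrival, note that only the updated coordinate is in question, and check the threshold inequality $v^0_x(\sigma)\leq v^0_x(\tilde\sigma)$ by splitting on $(\sigma(x),\tilde\sigma(x))$, with the like-sign cases handled by \eqref{eq: attractive minus}--\eqref{eq: attractive plus} and the mixed case by $\lambda=2\sup c$. The paper's write-up singles out only the mixed case explicitly and waves at the other two, so your version is in fact a bit more complete.
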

\begin{proof}
For concreteness, let us write $\sigma^0_{t-}=(\sigma^0_{t-}(\{x\}^c),\sigma^0_{t-}(x))=:(\eta,\alpha)$ with $\alpha =\pm 1$.
At an arrival $(x,t)$ (and depending on $U=U_{x,t}$), $(\eta,\alpha)$ gets updated to $(\eta,\beta)$ and it suffices to check that such an update is increasing in $(\eta,\alpha)$. The only case that does not follow straightforwardly from Equations (\ref{eq: attractive minus}, \ref{eq: attractive plus}), reduces to the following: take $\alpha=-1, \alpha'=+1,\beta=+1$ and $\eta\leq\eta'$, then $(\eta,\alpha) < (\eta',\alpha')$. The update $(\eta,-1)\to (\eta,1)$ happens when $U \geq   1- (1/\lambda) c^0_x((-1,\eta))$, and the update $(\eta',+1)\to (\eta',+1)$ happens when $U\geq (1/\lambda)c^0_x((+1,\eta'))$.  For the update to be increasing, we hence need that 
$$
(1/\lambda) c^0_x((+1,\eta')) \leq   1-(1/\lambda)c^0_x((-1,\eta))
$$
which follows because of $\lambda \geq 2\sup_{\sigma} c^0_x(\sigma)$.
%
%
\end{proof}

Depending on $U=U_{x,t}$ we call the arrival at $(x,t)$ a 'perturbation arrival' if
$$
(U-v^0_x(\sigma))(U-v^1_x(\sigma)) <0,\qquad \text{for some $\sigma$.}
$$
The idea of this definition is that, if an arrival at $(x,t)$ is \emph{not} a perturbation arrival, then  
$$\sigma^0_{t-}=\sigma^1_{t-} \quad \text{implies}\quad \sigma^0_{t} = \sigma^1_{t}.$$
The probability that a given arrival is a 'perturbation arrival' is bounded by 
$\epsilon$ (as defined in \Cref{def: epsilon}).

\subsection{Influence clusters}
The following description applies both for unperturbed and perturbed dynamics.  As such, we often do not write the superscripts $0/1$ to distinguish between the unperturbed and perturbed dynamics; $Y,W,\ldots$ can stand for $Y^{0/1},W^{0/1},\ldots$.  Let $\str \cdot \str$ refer to the $l^\infty$-norm on $\bbZ^d$ and let
$$
S_x=\{y, \str x-y \str \leq r \}
$$
and recall that both the rates $c^i$ are $r$-local.   Let us say $y,y' \in\bbZ^d $ are $r$-neighbors if $\str y-y' \str \leq r$ and say a subset $A \subset \mathbb Z^d$ is $r$-connected if any pair of vertices $x, y\in A$ can be connected by a path of $r$-neighbors.  Further, we shall say a set $A \subset \bbZ^d \times \bbR$ is $r$-connected if for any pair $(x, s), (y, t)\in A$, there is a piecewise constant-in-time path between them with jumps only at equal time $r$-neighbors.

\begin{definition}[spatial influence sets]
Fixing $s<t$, we say that $y$ influences $\sigma_t(x)$ at time $s$ if there is $\eta \in \Om$ such that $\sigma^i_t(x; s, \eta)\neq \sigma^i_t(x; s, \eta^y)$. Let $Y(s)=Y_{x,t}(s)$ be the (random) set of $y\in \mathbb Z^d$ at time $s$ which influence $\sigma_t(x)$.\end{definition}
Note that the sets $Y(s), s\leq t$ can change only at arrival times $s$.
\begin{definition}[Influence clusters]
For any $x,t$ we call 
$$
W_{x,t} =\overline{\cup_{s\leq t} Y(s)\times s}
$$
the influence cluster.
\end{definition}
Note that $W_{x,t}$ is a $r$-connected set.
Sometimes the influence clusters are too complicated to work with, so we define also the much simpler notion of light-rays.  A lightray $R$ starting at $(x,t)$ is (the graph of) a function $s \mapsto x(s)$  with $s \in (-\infty, t]$ (it is better to think of $s$ running backwards) such that $x(t)=x$ and  $x(s)$ is constant in $s$, except possibly at such $s$ where there is an arrival at $(x(s),s)$, in which case  $x(s_-)=y$ for some $y \in S_{x(s)}$. 

So a lightray is a backwards running path that can jump to $r$-connected sites whenever an arrival hits it.  Note that the definition of lightrays does not involve the variables $U$ whereas the definition of influence clusters involves them in an essential way. 
We need a basic lemma that is merely a restatement of definitions.
\begin{lemma} \label{lem: basic on clusters}
\begin{enumerate}
\item For any $x,t$ and $u \leq t$; 
$$
\big(W_{(x,t)} \cap \{ s\leq u \}\big) \subset  \big(\mathop{\cup}\limits_{y: (y,u) \in W_{(x,t)}}   W_{(y,u)} \big)
$$
\item  For any $x,t$
$$ W_{(x,t)} \subset \big(\mathop{\cup}\limits_{ \text{$R \to (x,t)$}}  \overline R \big),  $$
the union running over lightrays starting at $(x,t)$.
\end{enumerate}
\end{lemma}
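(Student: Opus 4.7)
Both claims express the fact that, under the coupling of \Cref{S:Couple}, information propagates only along $r$-connected spacetime paths whose spatial jumps occur at arrivals of the Poisson clocks. I would prove each part by making this causal structure explicit via the a.s.\ semigroup identity $\sigma_t(x;s,\eta)=\sigma_t\!\bigl(x;u,\sigma_u(\cdot;s,\eta)\bigr)$, valid for $s\le u\le t$. In both parts it suffices to treat points of $\cup_{s\le t}Y_{x,t}(s)\times\{s\}$, since the closure in the definition of $W$ then delivers the stated inclusions via passage to the limit.

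For (1), fix $z\in Y_{x,t}(s)$ with $s\le u\le t$ and let $\eta$ be a witness, so $\sigma_t(x;s,\eta)\ne\sigma_t(x;s,\eta^z)$. Setting $\alpha=\sigma_u(\cdot;s,\eta)$, $\beta=\sigma_u(\cdot;s,\eta^z)$ and $D=\{y:\alpha(y)\ne\beta(y)\}$, the inequality $\sigma_u(y;s,\eta)\ne\sigma_u(y;s,\eta^z)$ for each $y\in D$ directly witnesses $z\in Y_{y,u}(s)$, hence $(z,s)\in W_{(y,u)}$. To identify some $y_\star\in D$ that sits in $Y_{x,t}(u)$, I would enumerate $D=\{y_1,\dots,y_k\}$ (finite, since $D$ is contained in the finite forward cone of $(z,s)$) and telescope $\gamma_0=\alpha,\ \gamma_i=\gamma_{i-1}^{y_i},\ \gamma_k=\beta$; because $\sigma_t(x;u,\gamma_0)\ne\sigma_t(x;u,\gamma_k)$, some index $i$ satisfies $\sigma_t(x;u,\gamma_{i-1})\ne\sigma_t(x;u,\gamma_i)$, so that $y_i\in Y_{x,t}(u)$. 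Taking $y_\star=y_i$ yields $(z,s)\in W_{(y_\star,u)}$ with $(y_\star,u)\in W_{(x,t)}$, which is the required containment.

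For (2), the plan is to trace the ancestry of $(z,s)$ backwards through the arrivals. By $r$-locality of the rates, the random family $s\mapsto Y_{x,t}(s)$ is piecewise constant in $s$, changing only at arrivals of the Poisson clocks. A direct analysis of the update map at an arrival $(y,u)$, which acts on configurations through their restriction to $S_y$, yields the growth rule
\begin{equation*}
Y_{x,t}(u^-)\setminus Y_{x,t}(u^+)\subseteq S_y,\qquad\text{and this difference is empty unless } y\in Y_{x,t}(u^+).
\end{equation*}
Thus every newly admitted site $z'\in Y_{x,t}(u^-)\setminus Y_{x,t}(u^+)$ has an $r$-neighbor $y\in Y_{x,t}(u^+)$ at which the arrival occurs. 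Starting from $(x,t)$ with the constant lightray $x(\cdot)\equiv x$, I would grow the lightray inductively: each time the backward scan crosses such an arrival $(y,u)$ with the current lightray position equal to $y$, the lightray is allowed to jump to any newly created $z'\in S_y$. Finitely many iterations of this construction reach any prescribed $(z,s)\in \cup_{s'\le t}Y_{x,t}(s')\times\{s'\}$ along a valid lightray from $(x,t)$.

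The main technical obstacle is justifying the backward growth rule in (2), which requires carefully dissecting how $Y_{x,t}(s)$ transforms across an arrival via the update map at $(y,u)$ and verifying the monotonicity of $Y_{x,t}(\cdot)$ in $s$. Everything else is combinatorial bookkeeping, consistent with the authors' remark that the lemma is ``merely a restatement of definitions.''
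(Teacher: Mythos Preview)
The paper offers no proof of this lemma at all; the authors state that it ``is merely a restatement of definitions'' and add only the gloss that it ``expresses that influence clusters can grow at arrivals.'' Your plan is therefore not competing with any argument in the paper, but rather supplying the details the authors skipped. Your approach is the natural one and is correct.

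For part (1), the semigroup identity combined with the single-flip telescoping argument is exactly right: the set $D$ of discrepancies at time $u$ is finite (it lies in the forward cone of $(z,s)$), and one of its elements must be a witness for $Y_{x,t}(u)$. The closure is handled because the index set $\{y:(y,u)\in W_{(x,t)}\}$ is finite, so the right-hand side is already closed.

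For part (2), your growth rule $Y_{x,t}(u^-)\setminus Y_{x,t}(u^+)\subseteq S_y$, nonempty only if $y\in Y_{x,t}(u^+)$, is correct and is the heart of the matter. The cleanest way to finish is to note that the forward trace from $(z,s)$ --- repeatedly passing from the current site to the site $y$ of the arrival at which it exits $Y$ --- produces a valid lightray terminating at $(x,t)$, since at the last step one lands in $Y_{x,t}(t^-)=\{x\}$. Your description of building the lightray backward from $(x,t)$ is equivalent but slightly vaguer about how the correct jumps are chosen; the forward trace makes this explicit.

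One small correction: you list ``verifying the monotonicity of $Y_{x,t}(\cdot)$ in $s$'' as part of the task, but $Y_{x,t}(s)$ is \emph{not} monotone in $s$. At an arrival $(y,u)$ the site $y$ (and possibly other sites in $S_y$) can drop out of $Y$ as $s$ decreases past $u$, because the configuration $\alpha$ witnessing $y\in Y_{x,t}(u^+)$ need not be in the range of the update map. Fortunately your argument never actually uses monotonicity --- only the one-sided growth rule --- so this is a slip in the commentary, not in the proof.
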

 This lemma expresses  that influence clusters can grow at arrivals.   What is not captured by this lemma is the possibility and tendency of influence clusters to die.  This is the basic input from the unperturbed dynamics:
\begin{lemma} \label{lem: decay influence clusters}
 There is a $C$ and $\tau_0$ independent of $x,t$ such that
$$
\mathbb{P}(   Y^0_{x,t}(s) \neq \emptyset ) \leq  C\e^{-(t-s)/\tau_0}
$$
\end{lemma}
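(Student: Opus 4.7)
The plan is to reduce the event $\{Y^0_{x,t}(s) \neq \emptyset\}$ to an event involving only the two extremal couplings from the all-plus and all-minus initial data, and then invoke \Cref{ass: exp strong mixing}. Define $\sigma^{\pm,0}_t(x;s) := \sigma^0_t(x;s,\pm\mathbf{1})$, i.e.\ the unperturbed process realized on the same graphical construction of \Cref{S:Couple} started from the constant configuration $\pm\mathbf{1}$ at time $s$. Then \Cref{L:Attract}, applied to the unperturbed rates, gives for every $\eta \in \Omega$
\[
\sigma^{-,0}_t(x;s) \;\leq\; \sigma^0_t(x;s,\eta) \;\leq\; \sigma^{+,0}_t(x;s) \qquad \mathbb{P}\text{-a.s.}
\]
Consequently, on the event $\{\sigma^{+,0}_t(x;s) = \sigma^{-,0}_t(x;s)\}$ the map $\eta \mapsto \sigma^0_t(x;s,\eta)$ is constant and hence invariant under every single-site flip, so $Y^0_{x,t}(s) = \emptyset$. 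Contrapositively,
\[
\{Y^0_{x,t}(s) \neq \emptyset\} \;\subseteq\; \{\sigma^{+,0}_t(x;s) > \sigma^{-,0}_t(x;s)\}.
\]

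Next, since $\sigma^{+,0}_t(x;s) - \sigma^{-,0}_t(x;s) \in \{0,2\}$, taking probabilities and using the time-stationarity of the Poisson clocks in the graphical construction (so that $(\sigma^{\pm,0}_u(\,\cdot\,;s))_{u \geq s}$ has law $\mathbb{P}^0_{\pm\mathbf{1}}$ shifted by $s$) yields
\[
\mathbb{P}\bigl(Y^0_{x,t}(s) \neq \emptyset\bigr) \;\leq\; \tfrac{1}{2}\,\mathbb{E}\bigl[\sigma^{+,0}_t(x;s) - \sigma^{-,0}_t(x;s)\bigr] \;=\; \tfrac{1}{2}\Bigl(\mathbb{E}^0_{+\mathbf{1}}[\sigma_{t-s}(x)] - \mathbb{E}^0_{-\mathbf{1}}[\sigma_{t-s}(x)]\Bigr).
\]

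Finally, I would apply \Cref{ass: exp strong mixing} to the single-site cylinder function $f_x(\sigma) := \sigma(x)$, which gives
\[
\bigl|\mathbb{E}^0_{\pm\mathbf{1}}[\sigma_{t-s}(x)] - \mu^0(f_x)\bigr| \;\leq\; C(f_x)\, e^{-c(t-s)}.
\]
The translation invariance $C(f_x) = C(f_0)$ built into the assumption makes this bound uniform in $x$, and subtraction of the two one-sided bounds then gives the lemma with $\tau_0 := 1/c$ and $C := 2\,C(f_0)$, both independent of $x$ and $t$. The only genuinely substantive step is the attractivity-based identification of $\{Y^0_{x,t}(s) \neq \emptyset\}$ with the disagreement event between the extremal evolutions in the first paragraph; everything else is mechanical once this reduction is in place, so I do not anticipate any serious obstacle.
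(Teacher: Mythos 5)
Your argument is correct and is essentially identical to the paper's own proof: both use attractivity of the graphical construction to reduce the event $\{Y^0_{x,t}(s)\neq\emptyset\}$ to disagreement between the evolutions from all-$+$ and all-$-$, then invoke \Cref{ass: exp strong mixing} on the single-site observable. The paper states the reduction as an exact equality $\sigma^{+}_t(x)-\sigma^{-}_t(x)=2\chi(Y^0_{x,t}\neq\emptyset)$ while you only establish the inclusion $\{Y\neq\emptyset\}\subseteq\{\sigma^+>\sigma^-\}$, but that one-sided containment is all the lemma requires.
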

\begin{proof}
Let $\sigma^{\pm}_t(x)$ be the value of $\sigma^0_t(x)$ when the (unperturbed) dynamics was started at $t=0$ from all $\pm$. It is a function of the $U$'s in $\caF_{0,t}$ and $\sigma^0_0(y), y \in  Y^0_{x,t}(0)$. By attractivity, we have 
$$
\sigma^{+}_t(x)-\sigma^{-}_t(x) = 2 \chi (Y^0_{x,t}(0)\neq \emptyset)
$$
Taking expectations we get $\mathbb{E}^0_{\text{all $+$}}(\sigma_t(x))-\mathbb{E}^0_{\text{all $-$}}(\sigma_t(x)) =2 \mathbb{P}(Y^0_{x,t}(0) \neq \emptyset) $ and hence the claim follows from  \Cref{ass: exp strong mixing}.
\end{proof}

Now we consider an influence cluster $W^1$ associated to the perturbed dynamics with rates $c^1$. Some of the arrivals in the cluster correspond to 'perturbation arrivals'. However, away from these arrivals, the cluster $W^1$ coincides locally with some cluster $W^0$, by definition.
The following lemma formalizes this.

\begin{lemma}\label{lem: string of clusters}
For the influence cluster $ W^1_{x_0,t_0}$, let  $(x_i,t_i), i=1,2,\ldots$ be the 'perturbation arrivals' in the cluster, i.e.\ $(x_i,t_i) \in W^1_{x_0,t_0}$,  ordered anti-chronologically: $t_{i+1}<t_i$ (this is possible almost surely).  Then  
$$
W^1_{x_0,t_0} \subset \big( W^0_{x_0,t_0} \mathop{\cup}\limits_{i \geq 1} \mathop{\cup}\limits_{y \in S_{x_i}} W^0_{y,t_i}\big).
$$  
\end{lemma}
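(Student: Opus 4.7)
The guiding observation is that away from perturbation arrivals the $c^0$- and $c^1$-backward traces evolve identically: at any non-perturbation arrival the two rates agree as functions of the local configuration, so the functional dependence of $\sigma_t(x)$ on earlier spins is the same under both dynamics. Thus $W^1_{x_0,t_0}$ can deviate from $W^0_{x_0,t_0}$ only at the perturbation arrivals $(x_i,t_i)$ it contains, and at each such arrival the deviation is bounded by the $r$-ball $S_{x_i}$, since the update at $x_i$ depends only on spins in $S_{x_i}$.

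I would argue by strong induction on the number $k$ of perturbation arrivals in $W^1_{x_0,t_0}$ at times strictly less than $t_0$. The base case $k=0$ asserts $W^1_{x_0,t_0} = W^0_{x_0,t_0}$, which follows from a short backward induction on arrivals below $t_0$: assuming $Y^0(r_+)=Y^1(r_+)$, at any arrival $(x,r)$ either $x\notin Y^i(r_+)$ and both sets are unchanged, or $x\in Y^i(r_+)$; in the latter case the arrival must be non-perturbative (else $(x,r)$ would be counted in $k=0$), so $c^0$ and $c^1$ prescribe the same backward update and $Y^0(r_-)=Y^1(r_-)$. For $k\geq 1$, let $(x_1,t_1)$ be the most recent perturbation arrival in the cluster. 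The same argument restricted to the strip $(t_1,t_0]$ gives $W^1_{x_0,t_0}\cap\{s>t_1\} = W^0_{x_0,t_0}\cap\{s>t_1\}$, hence inside $W^0_{x_0,t_0}$. For the part at and below $t_1$, \Cref{lem: basic on clusters}(1) with $u=t_1$ yields
\[
W^1_{x_0,t_0}\cap\{s\leq t_1\} \subset \bigcup_{y:(y,t_1)\in W^1_{x_0,t_0}} W^1_{y,t_1},
\]
where the indexing set is contained in $Y^1(t_1+)\cup S_{x_1}=Y^0(t_1+)\cup S_{x_1}$, by the previous step and the observation that the perturbation at $(x_1,t_1)$ can enlarge the backward slice only within $S_{x_1}$. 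For each such $y$, the transitivity of backward reachability in the coupling (if $(y,t_1)\in W^1_{x_0,t_0}$ then $W^1_{y,t_1}\subset W^1_{x_0,t_0}$, by concatenation of backward chains) implies $W^1_{y,t_1}$ contains at most $k-1$ perturbation arrivals at times $<t_1$, and the induction hypothesis bounds it by $W^0_{y,t_1}\cup\bigcup_{j\geq 2}\bigcup_{z\in S_{x_j}} W^0_{z,t_j}$. If $y\in S_{x_1}$ this already sits inside the target RHS; if $y\in Y^0(t_1+)$ the same transitivity applied to $W^0$ gives $W^0_{y,t_1}\subset W^0_{x_0,t_0}$, again inside the target.

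The main obstacle is the transitivity of backward reachability, on which the induction relies to strictly decrease the count. This is intuitively clear in the coupling construction (influence chains concatenate), but a clean proof requires unpacking the iterative definition of $Y^i(s)$ in \Cref{S:Couple} and carefully handling left-/right-limits at arrival times; a conceptually cleaner alternative is to trace, for each $(z,s)\in W^1_{x_0,t_0}$, a single backward chain to $(x_0,t_0)$ and identify its smallest-time perturbation arrival $(x_{i^*},t_{i^*})$, noting that the chain-segment below $t_{i^*}$ uses only non-perturbation arrivals and therefore certifies $(z,s)\in W^0_{y,t_{i^*}}$ for some $y\in S_{x_{i^*}}$.
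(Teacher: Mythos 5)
Your proof proposal is essentially correct and matches the paper's understanding: the paper treats this lemma as a direct consequence of the definitions (no proof is given), precisely because the $c^0$- and $c^1$-update functions agree at non-perturbation arrivals and an update at $x$ can only involve sites in $S_x$. The conceptually cleaner alternative you sketch at the end — tracing a backward chain from each $(z,s)\in W^1_{x_0,t_0}$ to $(x_0,t_0)$, finding the topmost perturbation arrival $(x_{i^*},t_{i^*})$ on it, and noting the segment above it certifies membership in $W^0_{y,t_{i^*}}$ for some $y\in S_{x_{i^*}}$ (or in $W^0_{x_0,t_0}$ if there is no such arrival) — is the right way to make the argument airtight, and is arguably what the authors have in mind when they call the lemma a restatement of definitions. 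One small caution regarding your base case: when $k=0$ the argument gives $W^1_{x_0,t_0}\subset W^0_{x_0,t_0}$ but not necessarily equality, because the $U_{x,t}$ thresholds $v^0_x,v^1_x$ can still differ slightly at non-perturbation arrivals; what non-perturbation arrivals guarantee is that the resulting spin is the same, hence the backward dependence set under $c^1$ is contained in (not equal to) that under $c^0$. The containment is all the lemma asserts and all your induction needs, so the proof stands, but the equality claim should be weakened to an inclusion.
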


Since perturbation arrivals are unlikely and since the 'reversible' clusters $W^0$ die out quickly, it is intuitively plausible that also the 'reversible' clusters $W^1$ die out quickly:

\begin{lemma} \label{lem: decay influence clusters noneq}
 There is a $C$ and $\tau_1$ independent of $t,x$ such that
$$
\mathbb{P}(  Y^1_{x,t}(s) \neq \emptyset ) \leq  C \e^{-(t-s)/\tau_1}.
$$
\end{lemma}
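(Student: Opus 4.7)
My plan is to expand $W^1_{x,t}$ via Lemma~\ref{lem: string of clusters} as a chain of unperturbed clusters joined by perturbation arrivals, then bound the survival probability by a geometric series that becomes subcritical for small $\epsilon$. Iterating Lemma~\ref{lem: string of clusters} yields the following covering of the event $\{Y^1_{x,t}(s)\neq\emptyset\}$: there exist $n\geq 0$, perturbation arrivals $(x_1,t_1),\ldots,(x_n,t_n)$ with $t=:t_0>t_1>\cdots>t_n>s$, and sites $y_k\in S_{x_k}$ (setting $y_0:=x$), such that $(x_{k+1},t_{k+1})\in W^0_{y_k,t_k}$ for $k=0,\ldots,n-1$ and $W^0_{y_n,t_n}$ reaches time $s$ (the $n=0$ case being simply $W^0_{x,t}$ reaches $s$).

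\textbf{Factorization.} The cluster $W^0_{y_k,t_k}\cap\{u\in[t_{k+1},t_k]\}$ is $\mathcal{F}_{t_{k+1},t_k}$-measurable, so the $W^0$-cluster events along the chain live on disjoint time intervals and are independent. Each perturbation-arrival event has conditional probability at most $\epsilon$, independent of the cluster structure. Summing over positions $x_{k+1}$ gives, by translation invariance, a factor $G(t_k-t_{k+1}):=\mathbb{E}|Y^0(t_k-t_{k+1})|$, and the sum over $y_k\in S_{x_k}$ contributes $|S|$. This yields
\[
\mathbb{P}\bigl(Y^1_{x,t}(s)\neq\emptyset\bigr)\,\leq\,\sum_{n\geq 0}(\epsilon\lambda|S|)^{n}\,\bigl(G^{*n}*P^0\bigr)(T),
\]
with $T=t-s$ and $P^0(u):=\mathbb{P}(Y^0(u)\neq\emptyset)\leq Ce^{-u/\tau_0}$ from Lemma~\ref{lem: decay influence clusters}.

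\textbf{Summing the series.} Assuming $G(u)\leq\tilde{C}e^{-u/\tau}$, the convolution at $T$ is bounded by $C'e^{-T/\tau_*}T^n/n!$ with $\tau_*=\min(\tau,\tau_0)$, and summation in $n$ gives $C''e^{-(1/\tau_*-\epsilon\lambda|S|\tilde{C})T}$, which decays exponentially whenever $\epsilon<(\tau_*\lambda|S|\tilde{C})^{-1}$. This delivers $\mathbb{P}(Y^1_{x,t}(s)\neq\emptyset)\leq Ce^{-(t-s)/\tau_1}$ for a suitable $\tau_1>0$.

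\textbf{Main obstacle.} The bottleneck is the exponential-decay bound $G(u)\leq\tilde{C}e^{-u/\tau}$ on the \emph{expected} cluster volume, which is strictly stronger than the non-emptiness bound in Lemma~\ref{lem: decay influence clusters}. My plan is to apply Assumption~\ref{ass: exp strong mixing} pointwise to cylinder functions $f(\sigma)=\sigma(x)$, using attractivity to control $\mathbb{P}(y\in Y^0_{x,t}(t-u))$ uniformly in $y$, and then invoke the lightray bound of Lemma~\ref{lem: basic on clusters}(2) to restrict the sum over $y$ to a lightcone of polynomial volume in $u$---this should give $G(u)\leq Cu^{d}e^{-cu}$, which is sufficient for the summation step.
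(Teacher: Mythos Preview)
Your approach differs substantially from the paper's. Rather than a chain/convolution bound, the paper coarse-grains space--time into boxes $B_{\bsn}$, declares a box \emph{bad} if it contains a perturbation arrival, or a lightray escapes it sideways, or some unperturbed cluster rooted at its top survives to its bottom, and then shows (Lemma~\ref{lem: path of boxes}) that survival of $W^1$ forces a nearest-neighbour path of bad boxes. Bad-box events are finitely dependent with small probability (Lemma~\ref{lem: bound on bad box}), so domination by subcritical Bernoulli percolation finishes the proof. The structural payoff of coarse-graining is that the bad-box criterion only invokes the non-emptiness bound $P^0(u)$ from Lemma~\ref{lem: decay influence clusters} at $O(N^d)$ starting points per box; the expected cluster size $\mathbb{E}|Y^0|$ never appears.

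Your route, by summing over the spatial position $x_{k+1}$ of each perturbation arrival, genuinely requires $G(u)=\mathbb{E}|Y^0(u)|$ to decay exponentially, which is strictly more than Lemma~\ref{lem: decay influence clusters} provides. You identify this correctly, but your plan for it has a gap: the lightcone $L_u$ from Lemma~\ref{lem: basic on clusters}(2) is \emph{random}, not deterministically of polynomial volume in $u$. What you actually need is a large-deviation estimate $\mathbb{P}(z\in L_u)\leq e^{-c|z-x|}$ for $|z-x|\gg \lambda r u$ (obtainable by union-bounding over self-avoiding paths of length $\ell\geq |z-x|/r$ and using the Poisson tail $(\lambda u)^\ell/\ell!$), combined with the trivial bound $\mathbb{P}(z\in Y^0)\leq P^0(u)$ on the ballistic region $|z-x|\leq C u$. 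This yields $G(u)\leq C u^d e^{-u/\tau_0}$, and the polynomial prefactor is harmless in the convolution sum, so your scheme can be completed---but this lightray large-deviation input is not supplied and is not a one-liner. Two minor further points: the rates $c^0$ are not assumed translation-invariant, so take $G(u):=\sup_y \mathbb{E}|Y^0_{y,t}(t-u)|$ and rely on the uniformity built into Assumption~\ref{ass: exp strong mixing}; and the independence of the perturbation-arrival event at $(x_k,t_k)$ from the cluster event $\{x_{k+1}\in Y^0_{y_k,t_k}(t_{k+1})\}$ is delicate when $y_k=x_k$, since the arrival sits at the top of the relevant time interval---this is fixable (root at $t_k{-}$) but deserves a word.
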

The proof is given in the next section, relying on percolation arguments.  Given \Cref{lem:  string of clusters}, the proof of \Cref{thm: main} is immediate since \Cref{lem: decay influence clusters noneq} and the above construction imply that, for a cylinder function $f$ that depends only on $\sigma(x), x \in X$,
$$
\sup_{\sigma,\sigma'}  \str \mathbb{E}^1(f(\sigma_{t}) \str \sigma_s=\sigma) - \mathbb{E}^1(f(\sigma_{t}) \str \sigma_s=\sigma') \str \leq C(f) \e^{-(t-s)/\tau_1},
$$ 
with $C(f)=C \norm f \norm_\infty \str X \str$.

\subsection{Coarse graining: Proof of \Cref{lem: decay influence clusters noneq}}

We partition $\bbZ^d \times \bbR$ with rectangular boxes $B$ as follows. Fix large integers $L,M$ and we define first the box at origin
$$
B_0= \{0,1,\ldots, M-1 \}^d  \times (0,L],
$$
Then, for $\bsn=(k,\hfield) \in \bbZ^{d} \times\bbZ$,
$$B_{\bsn}=B_0+ (Mk,L \hfield)$$

%

The parameter $L$ will have to be chosen large compared to the typical exponential decay time $\tau_0$ that appears in \Cref{lem: decay influence clusters}, i.e.\ we choose $L_t=L \tau_0$ for some large $L \gg 1$.

For each box $B$, we consider also  extended boxes that we denote by $\widetilde B$. They are defined to have the same center as $B$ but with spatial linear size $3$ times bigger, i.e.\ in the above description $ \{0,1,\ldots, M-1 \}^d$ is replaced by $\{-M,\ldots, 2M-1 \}^d$. 
By the (spatial) boundary $\partial\widetilde{B}$ of an extended box, we understand the collection of points $(x,t)$  inside $\widetilde{B}$ such that there is an $(y,t),  y \in S_x $ outside the box.
By the top of the box $B$ we mean the collection of points $(x,t)$  inside the box such that $(x,t+)$ is outside the box.

The choice of $M$ is dictated by the requirement that it is unlikely that a lightray began at the top of $B$ can traverse the corresponding $\tilde{B}$ spatially, i.e. exit $\tilde{B}$ on the spatial boundary. An arrival of the coupled process allows a lightray to move a distance $r$ sideways and the arrivals have rate $\lambda$ clocks. So the probability that a lightray started from a given point at the top of a box $B$ can spatial traverse $\tilde{B}$ is bounded by $\e^{-CL}$, provided we choose 
$$
M-2 \geq c r L,\qquad \text{for some $c>1$}
$$

We call a box $B$ bad if either of the following three events occurs:
\begin{enumerate}
\item A  perturbation update occurs in $\widetilde{B}$.
\item There is a $(x,t) \in B$ such that a lightray $R$ starting at $(x,t)$ reaches $ \partial \widetilde{B}$.
\item  There is an influence cluster $W^0_{x,t}$ with $(x,t)$ in the top of $B$ such that that  $W^0_{x,t} \cap \partial \widetilde{B}=\emptyset$ and $W^0_{x,t}$ exits the box $\widetilde B$ at the bottom, i.e.\ there is a $(y,s) \in W^0_{x,t}$ such that  $s<t$ and $(y,s) \notin \widetilde{B}$ (it necessarily follows that $y$ is in the spatial projection of $\widetilde{B}$).
\end{enumerate}
 The event `box $B$ is bad' is measurable with respect to the data (arrivals and $U$'s) in the extended box $\widetilde{B}$. For 1),2), this is evident, and for 3) it follows from the fact that boxes are open sets at the bottom whereas influence clusters are closed sets. 
 
 The next lemma justifies the intuition that large influence clusters $W^1$ have long connected paths of bad boxes:
\begin{lemma}\label{lem: path of boxes}
If $(y,s) \in W^1_{x,t}$, then there is a path  ${\bsn_1},{\bsn_2},\ldots, {\bsn_m}$ in $\bbZ^{d+1}$ such that 
\begin{enumerate}
\item $ \str \bsn_{i+1}-\bsn_i \str_{\infty}=1$ for $i=1,\ldots,m-1$.
\item The temporal coordinates  $(\hfield_i)_i$ are non-increasing.
\item $B_{\bsn_1}$ contains  $(x,t)$ and $B_{\bsn_m}$ contains $(y,s)$.
\item The boxes $B_{\bsn_i}, 1<i<m$ are bad
\end{enumerate}
\end{lemma}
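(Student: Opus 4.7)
The proof is a coarse-graining / contour-style argument. I would trace a space-time path along the cluster $W^1_{x,t}$ from $(x,t)$ down to $(y,s)$, turn it into a sequence of lattice boxes, and then verify that every intermediate box satisfies one of the three conditions for being bad.

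\textbf{Stage 1: Constructing the path.} Since $W^1_{x,t}$ is $r$-connected and contains both $(x,t)$ and $(y,s)$, I construct a backwards-in-time path $\gamma\subset W^1_{x,t}$ joining these two points. Such a $\gamma$ may be taken to be piecewise constant in time, with spatial jumps of size at most $r$ occurring only at arrival times, by following the construction of $W^1_{x,t}$ as a set that grows backwards in time from $(x,t)$. Record the sequence of boxes $B_{\bsn}$ that $\gamma$ visits and delete consecutive repeats. Because $M>r$ (see the paragraph preceding the definition of bad boxes), a single spatial jump of $\gamma$ can only move it into a spatially nearest-neighbor box, while temporal transitions between boxes automatically increment $\hfield$ by $-1$. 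This yields a nearest-neighbor path in $\bbZ^{d+1}$ whose temporal coordinates are non-increasing, and whose endpoints are the boxes containing $(x,t)$ and $(y,s)$, settling conditions (1)--(3).

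\textbf{Stage 2: Intermediate boxes are bad.} Fix an intermediate box $B=B_{\bsn_i}$ with time range $(T_1,T_2]$ and suppose no perturbation arrival occurs in $\widetilde B$, so the first bad condition fails. By \Cref{lem: string of clusters}, $W^1_{x,t}\cap\widetilde B$ is covered by $W^0$-clusters whose starting points are either $(x,t)$ itself or lie near perturbation arrivals, all of which are outside $\widetilde B$ by assumption. Applying \Cref{lem: basic on clusters}(1) at time $u=T_2$ to each such $W^0$-cluster, I can replace the covering by $W^0$-clusters whose starting points $(z,T_2)$ lie at the top time slice of $\widetilde B$. Because $\gamma$ enters and exits $B$, some $W^0$-subcluster in this covering must extend from its starting point to a point of $W^1_{x,t}$ outside $\widetilde B$. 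If such a cluster $W^0_{z,T_2}$ with $z$ in the spatial projection of $B$ avoids $\partial\widetilde B$, then it has to leave $\widetilde B$ through the temporal bottom, which is condition (3). Otherwise it touches $\partial\widetilde B$, and by \Cref{lem: basic on clusters}(2) this forces a lightray starting at $(z,T_2)\in B$ to reach $\partial\widetilde B$, which is condition (2).

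\textbf{Main obstacle.} The delicate point in Stage 2 is ensuring that the covering $W^0$-subclusters can be chosen to start in the top face of $B$ itself, rather than in the corona $\widetilde B\setminus B$ at time $T_2$, since conditions (2) and (3) are phrased in terms of points of $B$. This is resolved by using that the path $\gamma$ from Stage 1 enters $B$ either from the box directly above (supplying a legitimate starting point in the top of $B$) or from a spatially adjacent box at an arrival; in the latter case one first truncates at the entry time of $\gamma$ into $B$ and re-applies \Cref{lem: basic on clusters}(1) to re-root the subcluster at a point of $B$. Keeping track of how $\gamma$ enters $B$ and ensuring that the correct $W^0$-subcluster is the one relevant for exiting $B$ backwards in time is the technical heart of the argument; the rest is the deterministic combinatorics already sketched.
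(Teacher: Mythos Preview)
Your proposal follows essentially the same route as the paper's (very terse) proof: decompose $W^1_{x,t}$ into $W^0$-pieces via \Cref{lem: string of clusters}, re-root at the top of each temporal slab via \Cref{lem: basic on clusters}(1), and invoke the dichotomy that a $W^0$-cluster rooted at the top of $B$ which survives a full time-span $L$ either stays inside $\widetilde B$ (triggering condition~(3)) or escapes sideways (triggering condition~(2) through \Cref{lem: basic on clusters}(2)). The paper compresses all of this into the sentence ``continuing with these considerations, and recalling \Cref{lem: string of clusters}''; your Stages~1--2 make the mechanism explicit, and the subtlety you isolate as the ``main obstacle''---that the re-rooted $W^0$-cluster must be anchored in the top of $B$ proper rather than somewhere in $\widetilde B\setminus B$---is genuine and is exactly what the paper is sweeping under the rug, so flagging it and indicating the fix via the entry point of $\gamma$ is appropriate.
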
 
\begin{proof}
If an influence cluster $W^0_{x,t}$, with $(x,t)$ in the top of a box $B$, reaches the time $t-L$, then that box $B$ is necessarily bad (either the time $t-L$ is reached at the bottom of $\widetilde{B}$, hence event 3) occurs, or $W^0_{x,t}$ travels far sideways, so as to trigger event 2). Continuing with these considerations, and recalling  \Cref{lem: string of clusters}, the claim follows.
\end{proof}

Hence the problem of large influence clusters is reduced to a percolation problem of bad boxes.  Let us now establish that the probability of being bad is small.
\begin{lemma}\label{lem: bound on bad box}
Fix $L = \lceil N\tau_0 \rceil$ and $M =2r \lceil N\tau_0 \rceil$ for some $N>0$. If $N$ is chosen large enough, and $\epsilon$ small enough (depending on $L$), then 
$$\mathbb{P}(\text{$B_{\bsn}$ is bad}) \leq \e^{-cN},$$
uniformly in $n$. 
\end{lemma}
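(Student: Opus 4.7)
The event $\{B_{\bsn}\text{ is bad}\}$ is by definition the union of three sub-events (1)--(3). My plan is to bound each separately by a term $\leq e^{-cN}$ and conclude by a union bound; the order of choices is first $N$ large (depending on $\lambda,r,d$), then $\epsilon$ small depending on $L=\lceil N\tau_0\rceil$. For event (1), the total number of Poisson arrivals in $\widetilde{B}$ is $\mathrm{Poisson}(\lambda|\widetilde{B}|)$ with $|\widetilde{B}|=(3M)^dL$, and, by definition, each arrival is independently a perturbation arrival with probability at most $\epsilon$; hence
\[
\mathbb{P}(\text{event (1)})\leq \lambda\epsilon(3M)^dL \leq C\epsilon N^{d+1},
\]
which is $\leq e^{-cN}$ once $\epsilon\leq e^{-(c+1)N}/N^{d+1}$, explaining the dependence of $\epsilon$ on $L$. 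For event (3), if $W^0_{x,t}$ (with $(x,t)$ at the top of $B$) exits $\widetilde{B}$ at the bottom without touching the sides, then in particular $Y^0_{x,t}(t-L)\neq\varnothing$, which by \Cref{lem: decay influence clusters} has probability $\leq Ce^{-L/\tau_0}=Ce^{-N}$; a union bound over the $\leq CN^d$ spatial sites at the top of $B$ leaves the exponential estimate intact.

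Event (2) requires more care. A lightray reaching $\partial\widetilde{B}$ from within $B$ must jump at least $M/r$ times in a time interval of length $L$, since each jump moves it by at most $r$. A first-moment computation bounds the expected number of length-$k$ lightrays starting at a fixed space-time point by $(\lambda L(2r+1)^d)^k/k!$: the factorial arises from the ordered-time integral over $(t-L,t]^k$, the factor $\lambda^k$ from the Poisson intensity at each jump, and $(2r+1)^{dk}$ from the choice of a neighboring site in $S_{x(s)}$ at each jump. Summing over $k\geq M/r$ and applying the Poisson-tail estimate $\sum_{k\geq k_0}a^k/k!\leq (ea/k_0)^{k_0}/(1-ea/k_0)$ yields exponential decay in $L$ (hence in $N$) as soon as $M/r$ exceeds $e\lambda(2r+1)^dL$; since $\lambda,r,d$ are fixed, this is achieved for $N$ large, with the understanding that the implicit constant in the heuristic $M-2\geq crL$ must be taken at least $e\lambda(2r+1)^d$. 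A further polynomial union bound over the $\leq CN^d$ spatial starting points at the top of $B$ (which dominates all of $B$ by monotonicity of the lightray-reachable set in starting time) is harmless. Combining the three estimates finishes the proof.

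\textbf{Main obstacle.} The delicate point is the entropy-vs-factorial balance in event (2): the number of distinct lightray trajectories of length $k$ grows like $(2r+1)^{dk}$, and only the Poisson factorial suppression from the ordered-time integral tames this, once $M/r$ exceeds $e\lambda(2r+1)^dL$. Since $\lambda,r,d$ are fixed parameters of the model, this is absorbed into the hypothesis that $N$ is chosen large enough. Events (1) and (3) are then comparatively routine first-moment and Poisson-tail estimates, with the key observation that the implicit constants in both are polynomial in $N$ and hence harmless against the $e^{-N}$ suppressions from \Cref{lem: decay influence clusters} and from making $\epsilon$ exponentially small in $N$.
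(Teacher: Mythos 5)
Your proposal is correct and follows the same three-case union bound as the paper, and the bounds for events (1) and (3) match the paper's almost verbatim. The one place you go further is event (2), where the paper simply asserts that the lightray-escape probability is $\mathrm{e}^{-cN}$ ``provided we choose $M-2\geq crL$ for some $c>1$''; you actually carry out the first-moment / Poisson-tail computation, counting $(2r+1)^{dk}$ lightray trajectories of length $k$ against the $(\lambda L)^k/k!$ factor from the ordered Poisson arrival times, and you rightly observe that this forces the constant in front of $rL$ to be at least of order $e\lambda(2r+1)^d$ rather than the bare $2$ that appears in the lemma's hypothesis $M=2r\lceil N\tau_0\rceil$. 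That is a genuine (if cosmetic) issue with the statement as printed: unless $\lambda(2r+1)^d$ happens to be small, the aspect ratio $M/(rL)$ must be taken larger, i.e.\ $M=C(\lambda,r,d)\,r\lceil N\tau_0\rceil$; this changes nothing downstream, since the coarse-grained percolation argument only needs the boxes to be bad with probability $\leq \mathrm{e}^{-cN}$ for some fixed aspect ratio and $N$ large. So the substance of your proof and of the paper's is the same, and your write-up is the more careful of the two on event (2).
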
 
\begin{proof}
We just go through the three possibilities in which a box can be bad:\\
1) the probability of a perturbation arrival in a box $\widetilde{B}$ is bounded by $C \epsilon \lambda N^{d+1}$. \\
2) as explained above, we have that the probability of a lightray starting from a given point in $B$ reaching $\partial\widetilde{B}$ is $\e^{-cN}$. Hence, considering all possible starting points (we can reduce to the boundary of $B$) we get   $N^d\e^{-cN}$. \\
3) The probability of $W^0_{x,t}$ reaching the time $t-L_t$ is bounded by $C\e^{-cN}$ by Lemma \Cref{lem: decay influence clusters} and the choice of $L$.  Summing over all $x$ at the top of $B$ we get $N^d \e^{-cN}$.\\
Hence it indeed suffices to choose $L\gg 1$ and $\epsilon \leq (1/\lambda)\e^{-cN}$. 
\end{proof}

We now finish the proof of  \Cref{lem: decay influence clusters noneq}. For any $\bsn \in \bbZ^{d+1}$, let $X_{\bsn}$ be the event that the box $B_{\bsn}$ is not bad. Since $X_{\bsn}$ is measurable with respect to the data of arrivals and $U$'s in $\widetilde B_{\bsn}$, we obtain that $X_\bsn$ is independent of $X_{\bsn'}$, with $\str \bsn' -\bsn\str > C$ for a finite $C$. Then, by \cite{liggett1997domination}, if  $p=\inf_\bsn \mathbb{P}(X_\bsn)\geq 1-C\e^{-cN}$ (see  \Cref{lem: bound on bad box}) is taken close enough to $1$, then the random field $(X_{\bsn} :\bsn \in \bbZ^{d+1})$  dominates stochastically a product random field of $0,1$-valued variables with $\mathbb{P}(1)=\rho$, where $\rho$ can be chosen arbitrarily close to $1$ upon increasing $p$.  Using now standard estimates for product random fields,  \Cref{lem: path of boxes} implies  \Cref{lem: decay influence clusters noneq}.

\section{The Weak Spatial Mixing condition}\label{sec: wsm}

The aim of this section is to prove the WSM condition \eqref{WMC} in the cases where it is not established yet, namely $h\neq 0$ and intermediate inverse temperatures $\beta$, see the discussion in Section \ref{sec: application}. 
Since $\beta$ does not play any role in our argument, from now on we absorb  it into the interaction and field variables: $J'_{xy}=\beta J_{xy}$ and $h'=\beta h$ and we drop the $'$ superscript since we will not need to refer to the original parameters. Since time does not play any role in this section, we write configurations as $\sigma=(\sigma_x)_{x \in \bbZ^d}$.  Let us introduce some additional notation. 
First let  
$$
-\mathcal{H}^{ b, J,h}_S(\sigma):=  \sum_{x\neq y, x, y \in S} J_{xy}\si_x\si_y+\sum_{x\in S} h\si_x + \sum_{x\in S, y \in B^c_L} J_{xy}\si_x b,\qquad    \si\in\{-1, 1\}^{S}
$$
be (minus) the Hamiltonian for an Ising model defined on a set $S \subset B_L$  with interaction $J=(J_{xy})$, external field strength $h$ and boundary condition $b \in \{0,\pm 1\}$ on the exterior $ B^c_L$.  Note, in particular, than there is no boundary interaction corresponding to pairs $\{x, y\}$ with $x\in S$ and $y\in B_L\backslash S$.  Corresponding to this Hamiltonian, we define partition functions and finite-volume Gibbs states in the usual way:
$$
Z_{S}^{ b, J,h} =\sum_{\sigma} \exp{\left(-\mathcal{H}^{ b, J,h}_S(\sigma)\right)},\qquad    \langle F \rangle_{S}^{\pm b, J,h}= \frac{1}{Z_{S}^{ b, J,h}}  \sum_{\sigma} F(\sigma) \exp{\left(-\mathcal{H}^{ b, J,h}_S(\sigma)\right)}
$$ 
where $F$ is any function on $\{-1, 1\}^{S}$.
For the sake of recognizability, we write for the boundary condition $b=f,\pm$ instead of $b=0,\pm 1$ ($f$ stands for free boundary conditions).
As announce, we prove the WSM condition:
\begin{theorem}[Weak Spatial Mixing]
\label{T1}
For any $h\neq 0$ and any bounded finite range interaction $J_{xy}$, there are $c,C>0$ so that
\[
\langle \si_0\rra_{B_L}^{h,+}-\langle \si_0\rra_{B_L}^{h,-}\leq Ce^{-cL}.
\]
\end{theorem}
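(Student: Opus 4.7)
The plan is to use the Fortuin--Kasteleyn (random-cluster) representation of the Ising model with external field, in which the field is realized by an auxiliary ``ghost vertex'' $g$ connected to every site of $B_L$ by an edge of weight $p_h := 1 - e^{-2|h|}$; without loss of generality take $h > 0$. In this representation, the Ising spin configuration is obtained from the random-cluster configuration on $B_L \cup \{g\}$ by assigning i.i.d.\ uniform $\pm$ signs to the clusters, subject to the constraint that the cluster containing $g$ always has sign $+$. The ``$+$'' boundary condition on $B_L$ corresponds to wiring $\partial B_L$ to $g$; the ``$-$'' boundary condition corresponds to wiring $\partial B_L$ to an additional anti-ghost vertex $g'$ whose cluster is fixed to have sign $-$.

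Next I would apply Holley's inequality to produce a monotone coupling $\sigma^+ \geq \sigma^-$ of the two Ising measures $\lla \cdot \rra^{h,\pm}_{B_L}$, so that
\beq
\lla \si_0 \rra^{h,+}_{B_L} - \lla \si_0 \rra^{h,-}_{B_L} = 2\,\bbP[\sigma^+_0 = +1,\ \sigma^-_0 = -1].
\eeq
The disagreement event on the right, viewed through the coupled random-cluster configuration, forces an open connection from $0$ to $\partial B_L$ within $B_L$ whose cluster is \emph{not} attached to $g$: if it were, the ``$+$'' sign of the ghost cluster would propagate to $0$ under both boundary conditions, and $\si^-_0 = +1$ would preclude disagreement. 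Thus it suffices to bound the probability of such a ``ghost-avoiding'' connection from $0$ to $\partial B_L$.

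To do so I would reveal only the spatial random-cluster edges within $B_L$ and let $C(0)$ denote the resulting cluster of $0$. On the event $\{0 \leftrightarrow \partial B_L\}$ we have $|C(0)| \geq L$. Conditional on the spatial edges, a direct calculation with the random-cluster weight $q=2$ shows that the probability that \emph{no} ghost edge incident to $C(0)$ is open equals
\beq
\frac{2(1-p_h)^{|C(0)|}}{1+(1-p_h)^{|C(0)|}} \;\leq\; 2(1-p_h)^{|C(0)|} \;\leq\; 2e^{-p_h L}.
\eeq
Combining this estimate with the previous step yields $\lla \si_0 \rra^{h,+}_{B_L} - \lla \si_0 \rra^{h,-}_{B_L} \leq C e^{-cL}$, as desired.

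The main obstacle is translating the monotone spin coupling of the second step into a precise statement about the underlying random-cluster configurations, and in particular handling the constraint $g \not\leftrightarrow g'$ that distinguishes the two boundary-extended FK measures. Once this bookkeeping is in place, the explicit $q=2$ conditional computation closes the argument without any small- or large-$\beta$ restriction, giving \Cref{T1} uniformly for all $h \neq 0$ and all $\beta > 0$; the uniformity in $\beta$ is what makes this approach succeed in the intermediate-temperature regime where cluster expansions around the high- or low-temperature fixed points fail.
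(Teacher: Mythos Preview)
Your proposal has a genuine gap, and it is precisely the point you flag but then dismiss as bookkeeping: the constraint $g\not\leftrightarrow g'$. The conditional calculation --- that, given the spatial edges, the cluster $C(0)$ fails to touch $g$ with probability $2(1-p_h)^{|C(0)|}/\bigl(1+(1-p_h)^{|C(0)|}\bigr)$ --- is valid only in the \emph{unconditioned} FK measure on $B_L\cup\{g,g'\}$. But $\lla\cdot\rra^{h,-}_{B_L}$ corresponds to that FK measure \emph{conditioned} on $g\not\leftrightarrow g'$, and on the event $\{0\leftrightarrow\partial B_L\}$ the cluster $C(0)$ coincides with the $g'$-cluster; the conditioning has already forced every ghost edge from it to be closed, so the conditional probability you actually need is $1$, not exponentially small. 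If instead you work in the unconditioned measure and form the ratio $\mu[0\leftrightarrow g',\,g\not\leftrightarrow g']/\mu[g\not\leftrightarrow g']$, the denominator carries the \emph{same} factor $(1-p_h)^{|C_1|}$ (since $|C_1|\geq|\partial B_L|$ regardless of whether $0\in C_1$), and the smallness cancels. Extracting a genuine gain means showing that, under the tilt $(1-p_h)^{|C_1|}$ on spatial configurations, the boundary cluster typically fails to reach the origin --- at large $\beta$ this is the statement that the conditioning forces a domain wall near $\partial B_L$ --- which is the substantive content of the theorem, not a formality. A related problem: the Holley coupling of $(\si^+,\si^-)$ lives purely on spin space, while the two Edwards--Sokal couplings $(\si^\pm,\omega^\pm)$ are not coupled to each other, so ``the disagreement event viewed through the coupled random-cluster configuration'' is not a well-defined object as written.

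The paper sidesteps the conditioning problem by working with random \emph{currents} rather than the random cluster model, via the change of variables $2\chi=\si^1+\si^2$, $2\eta=\si^1-\si^2$. This yields an identity \eqref{E:Diff} expressing $\lla\si_0\rra^{+}-\lla\si_0\rra^{-}$ as a \emph{positive} sum over pairs $(V,X)$ in which $0$ is current-connected to $\partial B_L$ through $X\subset V$. The exponential decay is then extracted by an isoperimetric step: the current cluster $T$ of the origin contains $\gtrsim L$ sites with small edge-boundary (\Cref{lem: bound on f}), so one can \emph{insert} the field $2h$ on $T$ by hand at cost $e^{-c|T|}$ (\Cref{lem: basic bound w}), and the remaining sum resums to a ratio of partition functions bounded by $1$ (\Cref{lem: ky}). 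No rare-event conditioning ever appears, which is what makes the argument uniform in $\beta$.
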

The rest of Section \ref{sec: wsm} is devoted to the proof. By symmetry, it suffices to consider $h>0$ and we do so henceforth.

\subsection{The variables $\chi,\eta$.}

We introduce a change of variables on pairs of spin configurations which will be instrumental in the proof.
Let the fields $\chi_x, \eta_x$ be defined by
\[
2\chi_x=\si^1_x+\si^2_x, \quad 2\eta_x=\si^1_x-\si^2_x.
\]  
Note that $\chi_x, \eta_x$ are Ising variables subject to the exclusion condition $\chi_x\neq 0$ if and only if $\eta_x=0$.
We have
\beq \label{eq: first product z}
Z^{+, J, h}_{B_L} Z^{-, J, h}_{B_L}=\sideset{}{'}\sum_{\chi, \eta} \exp\left(\sum _{x, y} 2 J_{x, y}[\chi_x \chi_y + \eta_x \eta_y]+ \sum_x 2h \chi_x +\sum_{x\in S, y \in B^c_L} 2J_{xy}\eta_x \right)=:\textrm{I}
\eeq
where $\sideset{}{'}\sum_{\chi, \eta}$ indicates that the exclusion condition is enforced.  
The RHS can further be expressed as
\beq
\label{E:Part}
\textrm{I}=\sum_{V}Z^{f,2J,2h}_{B_L\backslash V}Z^{+, 2J,0}_{V}.
\eeq
On the RHS of this equation, the sum is over subsets $V \subset B_L$ and is obtained from \Cref{eq: first product z} via the identification $V=\{x \in B_L,\chi_x=0\}$. 
In the same way we derive
\beq
\label{E:Diff}
Z^{+,J,h}_{B_L}Z^{-,J,h}_{B_L}[\langle \si_0\rra^{+, J, h}_{B_L}-\langle \si_0\rra^{-, J, h}_{B_L}]={2}\sum_{V: 0 \in V}Z^{f, 2J, 2h}_{B_L\backslash V}Z^{+, 2J, 0}_{ V} \langle \eta_0\rra_{ V}^{+, 2J,0}.
\eeq
Let us note already that $\langle \eta_0\rra_{ V}^{+, 2J,0}\neq 0$ if and only if there is a sequence of vertices $(0=x_0, x_1,\dotsc, x_k)$ such that $x_i \in V$ for all $i$, $J_{x_i, x_{i+1}}\neq 0$ and $x_k\in  B^c_L$.

   \subsection{Basics of random currents}  
Ultimately, we are going to compare the RHS of  \Cref{E:Diff} with 
 \[
 \sum_{V}Z^{f,2J,2h}_{B_L\backslash V}Z^{+, 2J,0}_{V}\:\:(=Z^{+, J, h}_{B_L} Z^{-, J, h}_{B_L}).
 \]
 This comparison requires us to do some surgery on the factor $Z^{+, 2J, 0}_{ V}\langle \eta_0\rra_{ V}^{+, 2J,0}$ in the sum on the RHS of  \Cref{E:Diff}.
Thus for the following discussion, let us fix $V$ such that $0 \in V$ and $0$ is connected to $B^c_L$ by the kernel $J$  $ V$ (cf.\ line below \Cref{E:Diff})
To perform the surgery (see the proof of \Cref{lem: basic bound w}) we shall use the language of random currents \cite{A}, though we don't need the more advanced technology developed in other papers using them (e.g. the switching lemma and its many consequences).
 Let $\mathfrak g$ be an external  'site' to be thought of as representing all sites in $  B^c_L$ (so, unlike what is typically done,  we are not using this site to describe the external field).  We extend the kernel $J_{xy}$ to pairs $x\mathfrak g$ by setting 
 $$J_{x\mathfrak g} = \sum_{y \in B_L^c}  J_{xy}
 $$ 
  For subsets $A\subset V$, let $\mathcal E_A$ denote the set of edges $xy$ with $x, y \in A \cup \{\mathfrak g\}$ such that $x\neq y$ and $J_{xy}\neq 0$.  Given the  $\mathbb{N}\cup\{0\}$-valued vector $\mathbf n=(\mathbf k, \mathbf l)=((k_{xy})_{xy\in \mathcal E_A}, (\hfield_{x})_{x\in A})$ indexed by $\mathcal E_A$ and $A$, let
\beq \label{eq: def w}
W_A^{+, 2J, 2h}(\mathbf n)=\prod_{xy\in \mathcal E_A} \frac{(2J_{xy})^{k_{xy}}}{k_{xy}!} \prod_{x \in A} \frac{(2h)^{\hfield_{x}}}{\hfield_{x}!} 
\eeq
and 
\[\partial \mathbf n=\{x \in V\cup \{\mathfrak g\}: \hfield_x+ \sum_{y\neq x} k_{xy} \text{ is odd}\}.
\]
Here, we note that if $\hfield_x=0$ for $x=\mathfrak g$, the weight $W_A^{+, 2J, 0}(\mathbf n)\neq 0$ only if $l_x=0$ for all $x$.
In this case, we shall write $\mathbf n=\mathbf k=(k_{xy})_{xy \in \mathcal E_A}$.  Similarly, we define $W_A^{f, 2J, 2h}$ by omitting edges ${x\mathfrak g}$ from the product in \eqref{eq: def w}.
By Taylor expansion of exponentials and resummation over $\eta$, we have
\beq
\label{E:RCR2}
Z^{+, 2J,0}_{ V} \langle \eta_0\rra_{ V}^{+, 2J,0}=\sum_{\mathbf k:\partial \mathbf k=\{0, \mathfrak g\}} W^{+, 2J, 0}_V(\mathbf k)
\eeq
Similar formulas are derivable if $h\neq0$, but are not needed here.

Given $\mathbf k=(k_{xy})_{xy\in \mathcal E_A}$, it is convenient to introduce the notation $0\stackrel{\mathbf k}{\leftrightarrow}x$  to indicate that $0$ is connected to  $x$ by edges $uv$ such that $k_{uv}\neq 0$.  In particular $0\stackrel{\mathbf k}{\leftrightarrow}\mathfrak{g}$ for any $\mathbf{k}$ contributing to \eqref{E:RCR2}.  The ($\mathbf k$-dependent) distance $\textrm{dist}_{\mathbf k}(0, x)$ is then the minimal number of edges in a non-zero $\mathbf k$-path from $0$ to $x$.   

\subsection{Clusters around the origin} \label{sec: clusters}
We start from the above formula \eqref{E:RCR2}.
The main idea is to relate this quantity to a partition sum at $h>0$, but this will only work for a subset $T$ (to be defined) of $V$ around the origin. This set is constructed now and the main point (\Cref{lem: bound on f} below) is to establish that it is large enough.  In all that follows, we assume, motivated by \eqref{E:RCR2}, that $0 \in V$ and $0\stackrel{\mathbf k}{\leftrightarrow}\mathfrak{g}$.   Let us define, for any $R>0$, 
\begin{align}
&T_R:=\{x\in  V: \textrm{dist}_{\mathbf k}(0, x)\leq R\}\\[1mm]
&\mathcal F_R:=\{xy\in  \mathcal E_V: x \in T_R, \,  y \notin T_R, \,k_{xy} \neq 0 \} 
\end{align}
Note that $T_R, \mathcal F_R$ depend on $\mathbf k$.  Recalling that the interaction is assumed to be of finite range $r$, we have $T_R\subset B_{rR}$. It will turn out that we only work with $R$ such that $rR < L/2$ and $L$ large, so we can assume that $\mathcal F_R$ contains no edges to $\mathfrak g$.

Let $a$ denote a parameter to be chosen precisely later (but one should think of it as small) and let  
\[
R_0=\inf\{R>L/(4r): |\mathcal F_R|  \leq a |T_R|\}.
\] 
The necessary bound on the size of $T_{R_0}$ is then
\begin{lemma} \label{lem: bound on f}
For any  $a$ small enough, there is $L_0(a)$ such that for $L>L_0(a), R_0< L/(2r)$.  
\end{lemma}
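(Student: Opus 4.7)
The plan is a standard expander-style isoperimetric argument on the random current multigraph $G(\mathbf k)$ whose vertices are $V\cup\{\mathfrak g\}$ and whose edges are $\{xy\in \mathcal E_V: k_{xy}\neq 0\}$. In this graph, $T_R$ is the ball of radius $R$ around $0$ and $\mathcal F_R$ is the (outer) edge-boundary of $T_R$. The proof argues by contradiction: if $|\mathcal F_R|/|T_R|>a$ persists throughout the window $(L/(4r),L/(2r))$, then $|T_R|$ grows exponentially in $R$, which cannot be reconciled with the polynomial volume bound coming from the finite range of $J$.

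More precisely, suppose $R_0\geq L/(2r)$. The key geometric ingredient is the degree bound in $G$: since $J_{xy}$ has range $r$, every vertex of $V$ has at most $D:=(2r+1)^d-1$ lattice neighbors within range, hence at most $D$ incident edges of $\mathcal E_V$ to lattice sites. Moreover, the remark made just after the definition of $\mathcal F_R$ (based on $T_R\subset B_{rR}\subset B_{L/2}$) ensures that in our window no edge of $\mathcal F_R$ touches $\mathfrak g$. Therefore, each edge of $\mathcal F_R$ has its non-$T_R$ endpoint in $V\setminus T_R$, and that endpoint lies in $T_{R+1}\setminus T_R$ by definition of graph distance. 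Counting edges by their outer endpoint gives $|\mathcal F_R|\leq D\,|T_{R+1}\setminus T_R|$, so the standing assumption $|\mathcal F_R|>a|T_R|$ implies
\[
|T_{R+1}|\geq\left(1+\frac{a}{D}\right)|T_R|.
\]

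Iterating this inequality over the integers $R\in[\lceil L/(4r)\rceil,\lfloor L/(2r)\rfloor-1]$, and using $|T_{\lceil L/(4r)\rceil}|\geq 1$ because $0\in T_R$ for every $R\geq 0$, yields
\[
|T_{\lfloor L/(2r)\rfloor}|\;\geq\;\bigl(1+a/D\bigr)^{\lfloor L/(2r)\rfloor-\lceil L/(4r)\rceil}\;=\;\exp(\Omega_a(L)).
\]
On the other hand, $T_R\subset B_{rR}$ gives $|T_{\lfloor L/(2r)\rfloor}|\leq (L+1)^d$, a polynomial bound. These two estimates clash as soon as $L$ exceeds some threshold $L_0(a)$, yielding the required contradiction and hence $R_0<L/(2r)$.

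The argument is essentially textbook once the right simple-graph model is isolated; I do not foresee a genuine obstacle. The only mildly delicate point is making sure the growth step is actually available: one must use $rR<L/2$ to exclude edges of $\mathcal F_R$ to $\mathfrak g$ (so that the boundary vertices we count live in $V$, not at the ghost), and one must count $\mathcal F_R$ as a set of edges of the simple graph $\mathcal E_V$ rather than with current multiplicities, matching the definition in the paper. The hypothesis that $a$ is small plays no role in this lemma per se; one merely absorbs it into the choice of threshold $L_0(a)$, which scales like $(D/a)\log(D/a)$.
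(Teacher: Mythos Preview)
Your proposal is correct and follows essentially the same isoperimetric-growth argument as the paper: both deduce $|T_{R+1}|-|T_R|\geq c\,|\mathcal F_R|$ from the finite-range degree bound, iterate the resulting inequality $|T_{R+1}|\geq(1+a/C)|T_R|$ over a window of $R$'s of length $\Theta(L)$, and contradict the polynomial bound $|T_R|\leq |B_{rR}|$. The only cosmetic differences are the choice of window (the paper uses $L/(4r)<R\leq L/(3r)$) and that the paper phrases the growth step as $|T_{R+1}|-|T_R|\geq (1/C(r))|\mathcal F_R|$ without spelling out that the outer endpoint lies in $T_{R+1}\setminus T_R$.
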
 
\begin{proof}
 Due to the finite range condition on the interaction, the number of $J$-edges (that is, $xy$ such that $J_{xy}\neq 0$) incident at any given site is bounded by $C(r) <\infty$. Therefore, 
\begin{equation} \label{eq: perimetry}
\str T_{R+1}  \str -\str T_{R} \str \geq (1/C(r)) \str  \mathcal F_R \str.
\end{equation}

For the sake of a contradiction, let us assume that the condition $ |\mathcal F_R|  \leq a |T_R|$ is violated for all $R$ such that $L/4 <Rr \leq L/3$.  Then the inequality \eqref{eq: perimetry} implies
$$
\str T_{L/(3r)} \str \geq (1+a/C(r))^{L/(12r)}\str T_{L/(4r)} \str.
$$
But this inequality implies exponential growth of $B_{rR}$ in $R$, contradicting $ |B_{R}| \sim R^d$ in $\bbZ^d$.
\end{proof}

Given $a$ small enough, we will henceforth choose $R=R_0$ as defined above (in particular $R_0 < L/(2r)$ by \Cref{lem: bound on f}), and we abbreviate $T=T_{R_0}$ and $\mathcal F= \mathcal F_{R_0}$.

\subsection{Surgery on \Cref{E:RCR2} } \label{sec: surgery}

From \Cref{E:RCR2}, we may use $T=T(\mathbf k)$, defined above, to decompose
\begin{equation}\label{eq: decomp wx}
\sum_{\partial \mathbf k=\{0, \mathfrak g\}}  W^{+, 2J, 0}(\mathbf k)=\sum_X {W(X)}
\end{equation}
where 
\beq \label{def: Wx}
{W(X)}:=\sum_{\partial \mathbf k=\{0, \mathfrak g\}} \mathbf 1\{T=X\} W_{X}^{+, 2J, 0}(\bfk_{\caE_X})W_{V\backslash X}^{+, 2J, 0}(\bfk_{\caE_{V\setminus X}})\prod_{(xy) \in \caE_{\partial_V X}} \frac{J_{xy}^{k_{xy}}}{k_{xy}!}
\eeq
where we have written $\bfk=(\bfk_{\caE_X},\bfk_{\caE_{\partial_V X}},\bfk_{\caE_{V\setminus X}})$
following an $X$-dependent decomposition of the edge set
$$
\caE_V=\caE_X \cup \caE_{V\setminus X} \cup \caE_{\partial}, \qquad  \caE_{\partial_V X}= \{ xy: x \in X,y\in V\setminus X\}.
$$

On the set $X$, we consider random currents $\mathbf m=(\mathbf t, \mathbf \hfield)$ associated to non-zero field, i.e.\ $\mathbf t= t_{xy}, \mathbf \hfield= \hfield_{x} $.
Let $0 \stackrel{\mathbf m}{\Leftrightarrow} X$ denote the condition $ \forall x \in X: 0 \stackrel{\mathbf m}{\leftrightarrow} x$, that is $X$ is connected with respect to the current configuration $\mathbf m$.  Then, naturally,
$$
 T(\mathbf k)=X \qquad  \Rightarrow \qquad  0 \stackrel{\mathbf m}{\Leftrightarrow} X, \qquad \text{if $\mathbf m=(\mathbf t, \mathbf \hfield)$ with $\mathbf t=\bfk_{\caE_X}$} $$  
Our main aim in this section is to connect \eqref{def: Wx} to a sum in which $h>0$. This is achieved by the next lemma.
\begin{lemma}\label{lem: basic bound w}
If $a$ is chosen small enough, there are $C, c>0$ such that $\forall X$,
\beq
\label{E:Key}
W(X)
\leq Ce^{-c L } Z^{+, 2J, 0}_{V\backslash X} \underbrace{\sum_{\partial \mathbf m=\varnothing} \mathbf 1\{0 \stackrel{\mathbf m}{\Leftrightarrow} X \} W_X^{f, 2J, 2h}(\mathbf m)}_{=:K_{X}}.
\eeq
\end{lemma}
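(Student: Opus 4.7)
The plan is to factor each current $\mathbf k$ contributing to $W(X)$ into an internal piece $\mathbf k_X$, an external piece $\mathbf k_{V\setminus X}$, and a boundary piece $\mathbf k_\partial$ on $\caE_{\partial_V X}$, and to bound each piece separately. The global parity condition $\partial \mathbf k = \{0, \mathfrak g\}$ localizes as follows: letting $S = S(\mathbf k_\partial) \subseteq X$ denote the set of vertices of $X$ meeting an odd number of odd-multiplicity boundary edges, and $S' \subseteq V\setminus X$ the analogous set on the other side, we have $\partial \mathbf k_X = \{0\} \Delta S$ and $\partial \mathbf k_{V\setminus X} = \{\mathfrak g\} \Delta S'$. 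The constraint $T(\mathbf k) = X$ furnishes two essential facts which I will rely on: $|\mathcal F(\mathbf k)| \leq a|X|$ by the definition of $R_0$, and $0 \stackrel{\mathbf k_X}{\Leftrightarrow} X$, since every vertex on a shortest $\mathbf k$-path from $0$ to an element of $T$ is itself in $T$.

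For the external sum, Griffiths' second inequality will yield
\[
\sum_{\mathbf k_{V\setminus X}:\, \partial = \{\mathfrak g\}\Delta S'} W^{+, 2J, 0}_{V\setminus X}(\mathbf k_{V\setminus X}) = Z^{+,2J,0}_{V\setminus X} \langle \sigma_{\mathfrak g} \prod_{x \in S'} \sigma_x \rangle^{+,2J,0}_{V\setminus X} \leq Z^{+, 2J, 0}_{V\setminus X}.
\]
For the internal sum, I would absorb the source $S^* := \{0\} \Delta S$ into the $h$-field: given $\mathbf k_X$ with $\partial \mathbf k_X = S^*$ and $0 \Leftrightarrow X$, the configuration $\mathbf m = (\mathbf k_X, \hfield)$ with $\hfield_x = \mathbf 1_{S^*}(x)$ satisfies $\partial \mathbf m = \varnothing$, and (using $W^{+,2J,0}_X = W^{f,2J,0}_X$ since $X$ is interior to $B_{L/2}$)
\[
W^{f, 2J, 2h}_X(\mathbf m) = (2h)^{|S^*|} W^{+, 2J, 0}_X(\mathbf k_X),
\]
so the internal sum is at most $(2h)^{-|S^*|} K_X$. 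Finally, summing the boundary weights over $k_{xy} \geq 1$ for $xy \in \mathcal F$ produces $\prod_{xy \in F}(e^{J_{xy}} - 1)$ for a chosen support $F$, with $|\caE_{\partial_V X}| \leq C(r)|X|$ by the finite-range assumption.

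Putting the three pieces together leads to an estimate of the shape
\[
W(X) \leq Z^{+,2J,0}_{V\setminus X}\, K_X \cdot \frac{1}{2h}\sum_{k=0}^{\lfloor a|X| \rfloor}\binom{C(r)|X|}{k}\left(\frac{e^{J_{\max}}-1}{2h}\right)^{k},
\]
and the main obstacle is to show that, for $a$ chosen small enough relative to $J_{\max}$ and $h$, this combinatorial factor is bounded by $C e^{-c|X|}$; combined with $|X| \geq R_0 > L/(4r)$ this delivers the target $C e^{-cL}$. The naive entropy estimate $\binom{C(r)|X|}{a|X|} \sim (eC(r)/a)^{a|X|}$ approaches unity as $a \to 0$ and is therefore insufficient on its own, so the decay must come from combining it with the energy cost $\prod(e^{J_{xy}}-1)$ and the $(2h)^{-1}$ penalty, and from exploiting the strict growth of $|T_R|$ for $R<R_0$ forced by the minimality of $R_0$ (which yields an effective lower bound on $|X|$ noticeably stronger than $R_0$ itself). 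Striking this energy--entropy balance through a small enough choice of $a$ is where the technical difficulty of the argument lies.
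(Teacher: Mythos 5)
Your factorization into internal/boundary/external pieces and the Griffiths bound $\langle\sigma_{\mathfrak g}\prod_{S'}\sigma\rangle^{+,2J,0}_{V\setminus X}\leq 1$ on the external sum match the paper exactly, and the combinatorics over the boundary set $\mathcal F$ are treated in essentially the same way. The gap is in the internal sum, and it is a real one.

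You inject each $\mathbf k_X$ into a \emph{single} term of $K_X$, the one with $l_x=\mathbf 1_{S^*}(x)$, and obtain $\sum_{\mathbf k_X}W^{+,2J,0}_X(\mathbf k_X)\leq (2h)^{-|S^*|}K_X$. That inequality is correct but carries no decay in $|X|$; your own final display then has $K_X$ multiplied by a factor which, as you observe, stays $\geq 1$ no matter how small $a$ is, so the target $Ce^{-cL}Z^{+,2J,0}_{V\setminus X}K_X$ is out of reach. The two repairs you float cannot close this: the minimality of $R_0$ gives $|X|\geq L/(4r)$ but not a combinatorial factor that decays in $|X|$, and the "energy costs" $\prod(e^{2J_{xy}}-1)$ and $(2h)^{-|S^*|}$ are costs, not gains, so they only make things worse. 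What the paper does at this point is the decisive step your argument misses: rather than exhibiting one $\mathbf m$ per $\mathbf k_X$, it sums over \emph{all} field currents $\mathbf l$ compatible with the fixed $\mathbf k_{\caE_X}$ (the parity constraint at each $x$ forces $l_x$ to have a definite parity), which produces
\[
\sum_{\mathbf l:\,\partial(\bfk_{\caE_X},\mathbf l)=\varnothing} W^{f,2J,2h}_X(\bfk_{\caE_X},\mathbf l)=\sinh(2h)^{|X\setminus M|}\cosh(2h)^{|M|}\,W^{+,2J,0}_X(\bfk_{\caE_X}),
\]
with $M$ the set of even-degree sites of $\bfk_{\caE_X}$ in $X\setminus\{0\}$. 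Since $\partial\mathbf k=\{0,\mathfrak g\}$ and $X$ touches $\mathfrak g$ only through $\mathcal F$, every odd site other than $0$ has a $\mathcal F$-edge, so $|X\setminus M|\leq|\mathcal F|+1\leq a|X|+1$ and hence $|M|\geq(1-a)|X|-1$. It is the factor $\cosh(2h)^{|M|}$, exponentially large in $|X|$ once $a$ is small, that provides the required $e^{-c|X|}$ when you invert the identity; your single-term injection discards exactly this many-to-one structure and with it the entire gain. Everything else in your outline is sound and would go through once this lemma is in place.
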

\begin{proof}
The main point is to compare $W_{X}^{+, 2J, 0}(\bfk_{\caE_X})$ in \eqref{def: Wx} with a weight in which $h> 0$, namely
\begin{lemma}
For any $\mathbf k$ contributing to \eqref{def: Wx}, 
\beq \label{eq: bound h on w}
W_{X}^{+, 2J, 0}(\bfk_{\caE_X})\leq Ce^{-c \str X\str} G(\bfk_{\caE_X}),\qquad  \text{with $G(\bfk_{\caE_X}) := \sum_{\substack{\mathbf m=(\mathbf t, \mathbf l) \text{s.t.}\\ \mathbf  t=\bfk_{\caE_X}, \partial \mathbf m=\varnothing}} W_X^{f, 2J, 2h}(\mathbf m)$}.
\eeq
\end{lemma}
\begin{proof}
The sum on the RHS is over $\mathbf \hfield$, constrained by $\bfk_{\caE_X}$. It is calculated explicitly as 
\[
\sum_{ \mathbf{\hfield}:  \partial( \bfk_{\caE_X},\mathbf \hfield) =\varnothing} W_X^{f, 2J, 2h}( \bfk_{\caE_X},\mathbf \hfield)= \sinh(2h)^{|X \setminus M\str }\cosh(2h)^{\str M\str }W_{X}^{+, 2J, 0}(\bfk_{\caE_X})
\]
where 
$$M=M(\bfk_{\caE_X})= \{x\in X\backslash \{0\}:   \text{$\sum_{y \in X}k_{xy} $ is even}\}.$$
Hence the claim will follow once we exhibit that (for large enough $L$)  the ratio $\frac{|M|}{ |X\setminus M|} $ can be made arbitrarily large by choosing $a$ small enough.
Recall that $\mathbf k$ arises as a current configuration for which $h=0$ and such that $\partial \mathbf k=\{0,\mathfrak g\}$. Since no site in $X$ can connect directly to $\mathfrak g$, we conclude that any site in $X\setminus M$ (other than $0$) has to have an edge in $\mathcal F$, i.e.\  $ |X\setminus M| \leq |\mathcal F| +1 $. Since, for $\mathbf k$ contributing to \eqref{def: Wx}, $T=X$ and invoking Lemma \ref{lem: bound on f}, we obtain then 
$$
|X\setminus M|  \leq 1 + a |X|
$$
which settles the claim.
\end{proof}

We now end the proof of \Cref{lem: basic bound w}. Plugging \eqref{eq: bound h on w} into \eqref{def: Wx} 
\beq \label{def: Wx2}
{W(X)} \leq C  \sum_{\partial \mathbf k=\{0, \mathfrak g\}}  \e^{-c |X|}  \mathbf 1\{T=X\} \,   G( \bfk_{\caE_X}) \,   W_{V\backslash X}^{+, 2J, 0}(\bfk_{\caE_{V\setminus X}})\prod_{xy \in \caE_{\partial X} } \frac{J_{xy}^{k_{xy}}}{k_{xy}!}
\eeq
Let us first resum $\mathbf{k}_{\caE_{V\setminus X}}$, keeping $\bfk_{\caE_X}, \bfk_{\caE_\partial}$ fixed: 
$$
\sum_{\mathbf{k}_{\caE_{V\setminus X}}:\, \partial\mathbf k=\{0,\mathfrak{g}\}}  W_{V\backslash X}^{+, 2J, 0}(\bfk_{\caE_{V\setminus X}}) =  \left\lla\prod_{y\in P}\si_y \right\rra^{+, 2J, 0}_{V\setminus X}  \leq 1
$$
where $P= \{y \in V\setminus X:  \text{$\sum_{x\in X}k_{yx}$ is odd}\}$.  Next, we sum over $\bfk_{\caE_\partial}$ while keeping $\caF$ fixed.  This leads to a factor $C_J^{|\caF|}$ (from the last factor in \eqref{def: Wx2}).
%
We also note that, given $T=X$, the only constraint on $\bfk_{\caE_X}$ is $0 \stackrel{\mathbf k}{\Leftrightarrow} X$. This leads to 
\beq \label{def: Wx3}
W(X) \leq C\e^{-c |X|}  \left(\sum_{ \bfk_{\caE_X}} \mathbf 1 \{0 \stackrel{\bfk}{\Leftrightarrow} X \}    G( \bfk_{\caE_X})\right)     \sum_{\caF: T=X}   C^{|\caF|}
\eeq
where we have indicated that the last sum over $\caF$ is constrained by $T=X$. 
The quantity between brackets is, by definition, $K_X$ (cf. \Cref{E:Key}). For the sum over $\caF$, we need some combinatorics: By \Cref{lem: bound on f}, we have $|\caF| \leq a |T|$. Furthermore, the number of edges that link to $T$ from $T^c$ is always bounded by $ C(r)|T|$ (with $C(r)$ here being the volume of the sphere with radius $r$). Therefore, the number of choices for $\caF$ reduces to the  
number of ways to pick a subset with up to $a|T|$ elements from $ C(r)|T|$ elements. By standard combinatorics, this is bounded as 
$$
\text{poly}(C(r)|T|)\e^{C(r)|T| f(\frac{a}{C(r)})},\qquad f(p)= -p\log p- (1-p)\log(1-p),\qquad p\in (0,1),
$$
where $\text{poly}(\cdot)$ stands for a polynomial.
Obviously $f(p)\to 0$ as $p\to 0$ and hence, choosing $a$ small enough, we can drop the sum over $\caF$ in \Cref{def: Wx3}, at the expense of readjusting the constant $c$ in $\e^{-c|X|}$. 
Finally, we note that $|T|>L/(4r)$ by the definition of $T$ and hence the lemma follows. 
\end{proof}

\subsection{Proof of \Cref{T1}} 
We are now ready to finish the proof. 
Summarizing the argument given up to this point, in particular \Cref{E:Diff}, \Cref{E:RCR2} and \Cref{lem: basic bound w},  we have shown that
\begin{equation}
\label{E:FinalStretch}
Z^{+,J,h}_{B_L} Z^{-,J,h }_{B_L}[\langle \si_0\rra^{+,J,h}-\langle \si_0\rra^{-,J,h}] \leq  Ce^{-cL} \sum_{V\subset B_L:  0\in V } \sum_{ X \subset V }  Z_{B_L\backslash V}^{f,2J, 2h}  Z^{+,2J, 0}_{V \backslash X} K_X, 
\end{equation}
where $K_X$ appeared on the RHS of \Cref{E:Key}. 
Up to now $K_X$ was defined such that naturally $K_X=0$ when $X=\emptyset$. We extend this function, setting $\tilde{K}_X=K_X$ and setting $\tilde{K}_{\emptyset}=1$.  Trivially, the bound \eqref{E:FinalStretch} gives rise to 
\beq \label{E:FinalStretch2}
Z^{+,J,h }_{B_L} Z^{-,J,h }_{B_L}[\langle \si_0\rra^{+,J,h}-\langle \si_0\rra^{-,J,h}] \leq  Ce^{-cL} \sum_{V\subset B_L} \sum_{ X \subset V }  Z_{B_L\backslash V}^{f,2J, 2h}  Z^{+,2J, 0}_{V \backslash X} \tilde K_X 
\eeq
This is a crucial relaxation in connection with the next lemma (in the case $0\notin A$). 
\begin{lemma}\label{lem: ky}
For any $A \subset B_L$, 
\beq
Z_A^{f, 2J, 2h} = \sum_{Y \subset A} Z_{A\setminus Y}^{f, 2J, 2h} \tilde{K}_Y
\eeq
\end{lemma}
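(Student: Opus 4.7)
The plan is to establish this identity by a cluster decomposition within the random current representation of $Z_A^{f,2J,2h}$.

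First I would Taylor-expand both the edge factors $\exp(2J_{xy}\sigma_x\sigma_y)$ and the field factors $\exp(2h\sigma_x)$ appearing in $Z_A^{f,2J,2h}$ and perform the spin sum; this re-expresses $Z_A^{f,2J,2h}$ as a weighted sum over random currents $\mathbf m=(\mathbf t,\mathbf \hfield)$ on $A$ with free boundary and parity constraint $\partial\mathbf m=\varnothing$, with weights $W_A^{f,2J,2h}(\mathbf m)$ as defined in the paper. I would then partition this sum by the $\mathbf t$-connected component of the origin inside $A$, namely $Y(\mathbf m):=\{y\in A:0\stackrel{\mathbf t}{\leftrightarrow}y\}$, with the convention $Y(\mathbf m)=\varnothing$ whenever $0\notin A$.

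The key observation is that fixing $Y(\mathbf m)=Y$ forces every edge $xy\in\mathcal E_A$ with $x\in Y$ and $y\in A\setminus Y$ to carry zero current. Because the free-boundary weight has no ghost edges crossing this cut, and because the parity constraint $\partial\mathbf m=\varnothing$ is local at each vertex, both the weight $W_A^{f,2J,2h}$ and the parity condition factorize cleanly across the cut $(Y,A\setminus Y)$. For $Y\ni 0$ the contribution from the restriction to $Y$ is exactly $K_Y$ --- since a current supported on $Y$ has its cluster of the origin automatically contained in $Y$, the requirement ``$Y$ equals the cluster of $0$'' is equivalent to $0\stackrel{\mathbf t_Y}{\Leftrightarrow}Y$ --- while the contribution from the restriction to $A\setminus Y$ is the unconstrained random-current expansion, recovering $Z_{A\setminus Y}^{f,2J,2h}$.

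Summing over all $Y\subset A$ then produces the right-hand side of the lemma, provided the $Y=\varnothing$ case is handled correctly. This is exactly the role of the extension $\tilde K_\varnothing:=1$: it records the trivial, empty-current contribution of weight $1$, which is the only possibility in the case $0\notin A$ where there is no non-trivial cluster of the origin inside $A$.

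The main technical point --- really the only subtlety --- is verifying that the parity condition $\partial\mathbf m=\varnothing$ genuinely splits across the cut $(Y,A\setminus Y)$. This reduces, once the cross-edge currents vanish, to the observation that the parity at any site $x\in Y$ depends only on $\hfield_x$ together with the edges incident to $x$, all of which lie inside $Y$ after the cross-currents have been killed; the statement at $x\in A\setminus Y$ follows symmetrically. Beyond this, no further obstacle is expected.
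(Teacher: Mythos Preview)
Your proposal is correct and matches the paper's approach exactly: expand $Z_A^{f,2J,2h}$ in random currents and decompose according to the $\mathbf t$-connected component of the origin, with the empty cluster handling the case $0\notin A$ via $\tilde K_\varnothing=1$. The paper's own proof is a terse two-line sketch pointing to precisely this decomposition; your write-up simply makes the factorization of the weight and of the parity constraint across the cut $(Y,A\setminus Y)$ explicit.
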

\begin{proof}
We proceed as in the run-up to  \Cref{lem: basic bound w}, in particular we write the analogues of \Cref{eq: decomp wx} and \Cref{def: Wx} but now with $\partial \bfk =\emptyset$. Choosing $Y$ the connected component (by $\bfk$) containing $0$, we get the claim.  
\end{proof}
%
By using first \Cref{E:Part} and then \Cref{lem: ky}, we now derive
\beq
\label{E:FinalStretch two}
Z^{+,J,h }_{B_L} Z^{-,J,h }_{B_L}=   \sum_{U\subset B_L } Z^{f, 2J,2h}_{B_L\backslash U}Z^{+, 2J,0}_{ U} = \sum_{U\subset B_L } \sum_{Y \subset U}  Z^{f, 2J,2h}_{(B_L\backslash U) \backslash Y }Z^{+, 2J,0}_{ U} \tilde{K_Y}.
\eeq 
By the change of variables $Y=X, U=B_L\setminus V $, we see that the RHS of \Cref{E:FinalStretch two} is exactly equal to the double sum in \Cref{E:FinalStretch2}. 
This implies hence $\langle\si_0\rra^{+,J,h}-\langle \si_0\rra^{-,J,h}\leq Ce^{-cL}$, which ends the proof. 
{\flushright \qed}

 \bibliography{noneq}

\begin{thebibliography}{10}

\bibitem{AAM96}
A.~Achahbar, J.~J. Alonso, and M.~A. Mu\~noz.
\newblock Simple nonequilibrium extension of the ising model.
\newblock {\em Phys. Rev. E}, 54:4838--4843, Nov 1996.

\bibitem{A}
Michael Aizenman.
\newblock Geometric analysis of {$\phi^4$} fields and {I}sing models.
\newblock {\em Commun. Math. Phys.}, 86(1), 1982.

\bibitem{NMW}
Nick Crawford, Wojciech De~Roeck, and Marius Sch\"utz.
\newblock Uniqueness regime for markov dynamics on quantum lattice spin
  systems.
\newblock {\em J. Phys. A: Math. Theor.}, 48:425203, 2015.

\bibitem{DU}
Ueltschi D.
\newblock Private communication.
\newblock 2016.

\bibitem{de2010phase}
Augustin de~Maere.
\newblock Phase transition and correlation decay in coupled map lattices.
\newblock {\em Communications in Mathematical Physics}, 297(1):229--264, 2010.

\bibitem{D}
R.~L. Dobrushin.
\newblock Markov processes with a large number of locally interacting
  components--existence of the limiting process and its ergodicity.
\newblock {\em Probl. Peredaci Inform.}, 7:70--87, 1971.

\bibitem{DS2}
R.L. Dobrushin and S.B. Shlosman.
\newblock Completely analytical interactions: Constructive description.
\newblock {\em J. Stat. Phys.}, 46:983--1014, 1987.

\bibitem{Dur99}
S.~N. Durlauf.
\newblock How can statistical mechanics contribute to social science?
\newblock {\em Proceedings of the National Academy of Sciences of the United
  States of America,}, 96:10582--10584, 1999.

\bibitem{fernandez2001non}
Roberto Fern{\'a}ndez and Andre Toom.
\newblock Non-gibbsianness of the invariant measures of non-reversible cellular
  automata with totally asymmetric noise.
\newblock {\em arXiv preprint math-ph/0101014}, 2001.

\bibitem{Gacs}
Peter G{\'a}cs.
\newblock Reliable cellular automata with self-organization.
\newblock {\em Journal of Statistical Physics}, 103(1):45--267, 2001.

\bibitem{Gray1}
Lawrence~F. Gray.
\newblock The positive rates problem for attractive nearest neighbor spin
  systems on ℤ.
\newblock {\em Zeitschrift f{\"u}r Wahrscheinlichkeitstheorie und Verwandte
  Gebiete}, 61(3):389--404, 1982.

\bibitem{Gray2}
Lawrence~F. Gray.
\newblock A reader's guide to gacs's ``positive rates'' paper.
\newblock {\em Journal of Statistical Physics}, 103(1):1--44, 2001.

\bibitem{Gross}
L.~Gross.
\newblock Logarithmic sobolev inequalities.
\newblock {\em Am. J. Math.}, 97:1061--1083, 1976.

\bibitem{H}
Yasunari Higuchi.
\newblock Coexistence of infinite (*)-clusters ii. ising percolation in two
  dimensions.
\newblock {\em Probability Theory and Related Fields}, 97(1):1--33, 1993.

\bibitem{holley1985possible}
Richard Holley.
\newblock Possible rates of convergence in finite range.
\newblock {\em Particle Systems, Random Media, and Large Deviations}, 41:215,
  1985.

\bibitem{israel2015convexity}
Robert~B Israel.
\newblock {\em Convexity in the theory of lattice gases}.
\newblock Princeton University Press, 2015.

\bibitem{kunsch1984time}
H~K{\"u}nsch.
\newblock Time reversal and stationary gibbs measures.
\newblock {\em Stochastic processes and their applications}, 17(1):159--166,
  1984.

\bibitem{TonLacoin}
H~Lacoin, F.~Simenhaus, and F.~L. Toninelli.
\newblock Zero-temperature 2d stochastic ising model and anisotropic
  curve-shortening flow.
\newblock {\em J. Eur. Math. Soc.}, 16:2557--2615, 2014.

\bibitem{lacoin2015heat}
Hubert Lacoin, Fran{\c{c}}ois Simenhaus, and Fabio Toninelli.
\newblock The heat equation shrinks ising droplets to points.
\newblock {\em Communications on Pure and Applied Mathematics},
  68(9):1640--1681, 2015.

\bibitem{lebowitz1990statistical}
Joel~L Lebowitz, Christian Maes, and Eugene~R Speer.
\newblock Statistical mechanics of probabilistic cellular automata.
\newblock {\em Journal of statistical physics}, 59(1):117--170, 1990.

\bibitem{LiggettBook}
T.~M. Liggett.
\newblock {\em Interacting Particle Systems}, volume 276.
\newblock Springer, 1985.

\bibitem{liggett1997domination}
Thomas~M Liggett, Roberto~H Schonmann, and Alan~M Stacey.
\newblock Domination by product measures.
\newblock {\em The Annals of Probability}, 25(1):71--95, 1997.

\bibitem{louis2004ergodicity}
Pierre-Yves Louis et~al.
\newblock Ergodicity of pca: equivalence between spatial and temporal mixing
  conditions.
\newblock {\em Electronic Communications in Probability}, 9:119--131, 2004.

\bibitem{LS}
Eyal Lubetzky and Allan Sly.
\newblock Information percolation and cutoff for the stochastic ising model.
\newblock {\em Journal of the American Mathematical Society}, 29:729--774,
  2016.

\bibitem{MOS}
F.~Martinelli, E.~Olivieri, and E.~Scoppola.
\newblock Metastability and exponential approach to equilibrium for
  low-temperature stochastic ising models.
\newblock {\em Journal of statistical physics}, 61(5-6):1105--1119, 1990.

\bibitem{MO}
Fabio Martinelli and Enzo Olivieri.
\newblock Approach to equilibrium of glauber dynamics in the one phase region.
  i. the attractive case.
\newblock {\em Comm. Math. Phys.}, 161(3):447--486, 1994.

\bibitem{M}
D.G. Martirosyan.
\newblock Theorems on strips in the classical ising ferromagnetic model.
\newblock {\em Sov. J. Contemp. Math.}, 22:59--83, 1987.

\bibitem{propp1998coupling}
James Propp and David Wilson.
\newblock Coupling from the past: a user’s guide.
\newblock {\em Microsurveys in Discrete Probability}, 41:181--192, 1998.

\bibitem{SS}
Roberto Schonmann and Senya Shlosman.
\newblock Complete analyticity of the 2d ising model completed.
\newblock {\em Comm. Math. Phys.}, 170(2):453--482, 1995.

\bibitem{SZ2}
Daniel Stroock and Bogus{\l}aw Zegarlinski.
\newblock The equivalence of the logarithmic sobolev inequality and the
  dobrushin - shlosman mixing condition.
\newblock {\em Commun. Math. Phys.}, 144:303--323, 1992.

\bibitem{toom1974nonergodic}
Andrei~L Toom.
\newblock Nonergodic multidimensional system of automata.
\newblock {\em Problems of Information Transmission}, 10(3):70--79, 1974.

\bibitem{SteifvdB}
J.~van~den Berg and J.~E. Steif.
\newblock On the existence and nonexistence of finitary codings for a class of
  random fields.
\newblock {\em Ann. Probab.}, 27(3):1501--1522, 07 1999.

\bibitem{yarotsky}
D.~A. Yarotsky.
\newblock Uniqueness of the ground state in weak perturbations of
  non-interacting gapped quantum lattice systems.
\newblock {\em J. Stat. Phys.}, 118:119--144, 2005.

\end{thebibliography}
\bibliographystyle{plain}
 
\end{document}